\newif\ifconf\conffalse
\DeclareMathOperator{\polylog}{polylog}
\DeclareMathOperator{\argmin}{argmin}
\DeclareMathOperator{\poly}{poly}
\DeclareMathOperator{\nnz}{nnz}
\DeclareMathSymbol{\qedsymb} {\mathord}{AMSa}{"04}
\newcommand{\eps}{\varepsilon}
\renewcommand{\epsilon}{\varepsilon}
\mathchardef\mhyphen="2D
\newcommand{\ceil}[1]{\left\lceil #1 \right\rceil}
\newcommand{\prob}[1]{\operatorname{Pr}\left[\,#1\,\right]}
\newcommand{\var}[1]{\operatorname{Var}\left[\,#1\,\right]}
\newcommand{\R}{\mathbb{R}}
\newcommand{\inprod}[1]{\langle #1 \rangle}
\newcommand{\E}{\mathbb{E}}
\renewcommand{\Pr}{\mathbb{P}}
\newcommand{\EquationName}[1]{\label{eq:#1}}
\newcommand{\LemmaName}[1]{\label{lem:#1}}
\newcommand{\SectionName}[1]{\label{sec:#1}}
\newcommand{\TheoremName}[1]{\label{thm:#1}}
\newcommand{\Equation}[1]{Eq.\:\eqref{eq:#1}}
\newcommand{\Lemma}[1]{Lemma~\ref{lem:#1}}
\newcommand{\Section}[1]{Section~\ref{sec:#1}}
\newcommand{\Theorem}[1]{Theorem~\ref{thm:#1}}
\newtheorem{theorem}{Theorem}
\newtheorem{corollary}[theorem]{Corollary}
\newtheorem{definition}[theorem]{Definition}
\newtheorem{lemma}[theorem]{Lemma}
\newtheorem{remark}[theorem]{Remark}
\newcommand{\proofbelow}{3pt}
\newcommand{\afterproof}{\hfill $\blacksquare$ \par \vspace{\proofbelow}}
\newcommand{\aftersubproof}{\hfill $\Box$ \par \vspace{\proofbelow}}
\renewenvironment{proof}{\noindent\textbf{Proof.}\,}{\afterproof}
\renewcommand{\th}{\ifmmode{^{\textrm{th}}}\else{\textsuperscript{th}\ }\fi}
\newcommand{\comment}[1]{}
\begin{document}

\author{Jelani Nelson\thanks{Institute for Advanced
    Study. \texttt{minilek@ias.edu}. Supported by NSF
  CCF-0832797 and NSF DMS-1128155.}
\and Huy L. Nguy$\tilde{\hat{\mbox{e}}}$n\thanks{Princeton
  University. \texttt{hlnguyen@princeton.edu}. Supported in part by
  NSF CCF-0832797 and a Gordon Wu fellowship.}}

\title{Sparsity Lower Bounds for Dimensionality Reducing Maps}

\maketitle

\begin{abstract}
We give near-tight lower bounds for the sparsity required in several
dimensionality reducing linear maps. First, consider the
Johnson-Lindenstrauss (JL) lemma which states that for any set of $n$
vectors in $\R^d$ there is a matrix $A\in\R^{m\times d}$ with $m =
O(\eps^{-2}\log n)$ such that mapping by $A$ preserves pairwise
Euclidean distances of these $n$ vectors up to a $1\pm\eps$
factor. We show that there exists a set of $n$ vectors such that any
such matrix $A$ with at most $s$ non-zero entries per column must have
$s = \Omega(\eps^{-1}\log n/\log(1/\eps))$ as long as $m <
O(n/\log(1/\eps))$. This bound
improves the lower bound of $\Omega(\min\{\eps^{-2},
\eps^{-1}\sqrt{\log_m d}\})$ by [Dasgupta-Kumar-Sarl\'{o}s, STOC
2010], which only held against the stronger property of distributional
JL, and only against a certain restricted class of distributions.
Meanwhile our lower bound is against the JL lemma itself, with no
restrictions.
Our lower bound matches the sparse Johnson-Lindenstrauss 
upper bound of [Kane-Nelson, SODA 2012] up to an $O(\log(1/\eps))$
factor.

Next, we show that any $m \times n$ matrix with the $k$-restricted
isometry property (RIP) with constant distortion must have at least
$\Omega(k\log(n/k))$ non-zeroes per column if $m=O(k\log (n/k))$, the
optimal number of rows of RIP matrices, and $k < n/\polylog n$. This
improves the previous lower bound of $\Omega(\min\{k, n/m\})$ by [Chandar,
2010] and shows that for virtually all $k$ it is impossible to have a
sparse RIP matrix with an optimal number of rows.

Both lower bounds above also offer a tradeoff between sparsity and
the number of rows.

Lastly, we show that any oblivious distribution over subspace
embedding matrices with $1$ non-zero per column and preserving
distances in a $d$ dimensional-subspace up to a constant factor must
have at least $\Omega(d^2)$ rows. This matches one of the 
upper bounds in [Nelson-Nguy$\tilde{\hat{\textnormal{e}}}$n, 2012]
and shows the impossibility of obtaining
the best of both of constructions in that work, namely 1 non-zero
per column and $\tilde{O}(d)$ rows.
\end{abstract}

\section{Introduction}
The last decade has witnessed a burgeoning interest in algorithms for
large-scale
data. A common feature in many of these works is the exploitation of
data sparsity to achieve algorithmic efficiency, for example to have
running times
proportional to the actual complexity of the data rather
than the dimension of the ambient space
it lives in. This approach has found applications in compressed
sensing~\cite{CT05,don06}, dimension
reduction~\cite{BOR10,DKS10,KN10,KN12,WDLSA09}, and
numerical linear algebra~\cite{CW12,MM12,MP12,NN12a}.  Given the success of
these algorithms, it is important to understand their limitations. 
Until now, for most of
these problems it is not known how far one can reduce the running
time on sparse inputs. In this work we make a step towards
understanding the performance of algorithms for sparse data and show
several tight lower bounds.

In this work we provide three main contributions. We give near-optimal
or optimal sparsity lower bounds for Johnson-Lindenstrauss transforms,
matrices satisfying the restricted isometry property for use in
compressed sensing, and subspace embeddings used in numerical linear
algebra. These three contributions are discussed
in \Section{jl-intro}, \Section{ripintro}, and \Section{oseintro},
respectively.

\subsection{Johnson-Lindenstrauss}\SectionName{jl-intro}
The following lemma, due to Johnson and Lindenstrauss~\cite{JL84}, has
been used widely in many areas of computer science to reduce data
dimension.
\begin{theorem}[Johnson-Lindenstrauss (JL) lemma {\cite{JL84}}]
For any $0<\eps<1/2$ and any $x_1,\ldots,x_n$ in $\R^d$, there exists
$A\in\R^{m\times d}$ with $m = O(\eps^{-2}\log n)$ such that for all
$i,j\in[n]$\footnote{Here and throughout this paper, $[n]$ denotes the
set $\{1,\ldots,n\}$.}, $$ \|Ax_i - Ax_j\|_2 = (1\pm\eps)\|x_i - x_j\|_2 .$$
\end{theorem}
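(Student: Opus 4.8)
The plan is to use the classical probabilistic method: exhibit a single random matrix $A$ and show it succeeds with positive probability. First I would let $A = \frac{1}{\sqrt m}G$, where $G\in\R^{m\times d}$ has entries drawn i.i.d.\ from the standard Gaussian $N(0,1)$, and reduce the claim to a one-dimensional concentration statement (the distributional JL property): for any fixed $v\in\R^d$,
$$\prob{\norm{Av}_2 \neq (1\pm\eps)\norm{v}_2} \le \delta$$
provided $m = O(\eps^{-2}\log(1/\delta))$. Once this is in hand, I would apply it to each of the $\binom{n}{2}$ difference vectors $v = x_i - x_j$ with $\delta$ set just below $1/\binom{n}{2}$ and union bound: the total failure probability is strictly less than $1$, so some realization of $A$ simultaneously preserves all pairwise distances, and the number of rows needed is $m = O(\eps^{-2}\log\binom{n}{2}) = O(\eps^{-2}\log n)$, as claimed.

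The technical core is the one-dimensional concentration bound, which I would establish as follows. By homogeneity it suffices to take $\norm{v}_2 = 1$. Writing $g_1,\dots,g_m$ for the rows of $G$, each inner product $\inprod{g_k,v}$ is a standard Gaussian, so $Q := m\norm{Av}_2^2 = \sum_{k=1}^m \inprod{g_k,v}^2$ is a chi-squared random variable with $m$ degrees of freedom and $\expect{Q} = m$. A two-sided tail bound $\prob{\abs{Q - m} > \eps m} \le 2e^{-c\eps^2 m}$ for some constant $c>0$ and all $0<\eps<1/2$ then gives $\norm{Av}_2^2 = 1\pm\eps$, and taking square roots (using $\sqrt{1\pm\eps} = 1\pm\Theta(\eps)$) yields $\norm{Av}_2 = 1\pm\eps$ after rescaling $\eps$ by a constant. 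Choosing $m = C\eps^{-2}\log(1/\delta)$ for a large enough constant $C$ makes the failure probability at most $\delta$.

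The step I expect to be the main obstacle, and the one that forces the $\eps^{-2}$ dependence, is this chi-squared tail bound. I would obtain it by the Chernoff/Bernstein method: the moment generating function of a single squared Gaussian is $\expect{e^{t\inprod{g_k,v}^2}} = (1-2t)^{-1/2}$ for $t<1/2$, so by independence $\expect{e^{tQ}} = (1-2t)^{-m/2}$. Applying Markov's inequality to $e^{tQ}$ (and to $e^{-tQ}$ for the lower tail) and then optimizing over $t$---equivalently, Taylor-expanding $\log(1-2t)^{-1/2}$ and observing that the first-order term cancels against the $\expect{Q}=m$ centering while the surviving second-order term is quadratic in the deviation---produces the exponent $c\eps^2 m$. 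It is precisely this cancellation of the linear term that yields $\eps^2$ rather than $\eps$ in the exponent, and hence the $\eps^{-2}\log n$ bound on $m$; the remainder is routine bookkeeping of constants.
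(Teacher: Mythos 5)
Your proposal is correct, but note that the paper does not prove this statement at all: it is background, cited directly from [JL84], and the paper's contributions are lower bounds against such matrices. The route you take is the standard one the paper itself alludes to in Section 1.1 --- a dense i.i.d.\ Gaussian matrix satisfying the \emph{distributional} JL property with failure probability $\delta$, followed by setting $\delta \approx 1/n^2$ and union bounding over the $\binom{n}{2}$ difference vectors $x_i - x_j$ --- and your chi-squared Chernoff argument (MGF $(1-2t)^{-m/2}$, cancellation of the linear term forcing the $\eps^2$ in the exponent) is the standard and correct way to establish the one-dimensional concentration step, so there is no gap to report.
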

Typically one uses the lemma in algorithm design by mapping some
instance of
a high-dimensional computational geometry problem to a lower
dimension. The running time to solve the instance then becomes the
time needed for the lower-dimensional problem, plus the time 
to perform the matrix-vector
multiplications $Ax_i$; see \cite{Indyk01,Vempala04} for further
discussion. This latter step
highlights the importance of
having a JL matrix  supporting fast matrix-vector multiplication.
The original proofs of the JL lemma took $A$ to be a random dense
matrix, e.g.\ with i.i.d.\ Gaussian, Rademacher, or even subgaussian
entries \cite{Achlioptas03,AV06,DG03,FM88,IM98,JL84,Matousek08}. The
time to compute $Ax$ then becomes $O(m\cdot \|x\|_0)$, where $x$ has
$\|x\|_0\le d$ non-zero entries.

A beautiful work of Ailon and Chazelle \cite{AC09} described a
construction of a JL matrix $A$ supporting matrix-vector
multiplication in time $O(d\log d + m^3)$, also with $m =
O(\eps^{-2}\log n)$. This was improved to $O(d\log d + m^{2+\gamma})$
\cite{AL09} with the same $m$ for any constant $\gamma>0$, or to
$O(d\log d)$ with $m = O(\eps^{-2}\log n\log^4 d)$
\cite{AL11,KW11}. Thus if $\eps^{-2}\log n \ll \sqrt{d}$ one can
obtain nearly-linear $O(d\log d)$ embedding time with the same target
dimension $m$ as the original JL lemma, or one can also obtain
nearly-linear time for any setting of $\eps,n$ by increasing $m$
slightly by $\polylog d$ factors.

While the previous paragraph may seem to present the end of the story,
in fact note that the ``nearly-linear'' $O(d\log d)$ embedding time is
actually much
worse than the original $O(m\cdot \|x\|_0)$ time of dense JL
matrices when $\|x\|_0$ is very small, i.e.\ when $x$ is
sparse. Indeed, in several applications we expect $x$ to be
sparse. 
Consider the bag of words model in information
retrieval: in for example an email spam collaborative filtering system
for {Yahoo!~Mail} \cite{WDLSA09}, each email is treated as a
  $d$-dimensional
vector where $d$ is the size of the lexicon. The $i$th entry of the
vector is some
weighted count of the number of occurrences of word $i$ (frequent
words like ``the'' should be weighted less heavily). A machine
learning algorithm is employed to learn a spam classifier, which
involves dot products of email vectors with some learned classifier
vector, and JL dimensionality reduction is used to speed up the
repeated dot products that are computed during training. Note that in
this scenario we expect $x$ to be sparse since most emails do not
contain nearly every word in the lexicon. 
An even starker scenario is the turnstile streaming model, where the
vectors $x$ may receive coordinate-wise updates in a
data stream. In this case maintaining $Ax$ in a stream given some
update of the form ``add $v$ to $x_i$'' requires adding $vAe_i$ to
the compression $Ax$ stored in memory. Since $\|e_i\| = 1$, we would not
like to spend $O(d\log d)$ per streaming update.

The intuition behind all the works \cite{AC09,AL09,AL11,KW11} to
obtain $O(d\log d)$ embedding time was as follows. Picking $A$ to be a
scaled sampling matrix (where each row has a $1$ in a random location)
gives the correct expectation for $\|Ax\|_2^2$, but the variance
may be too high. Indeed, the variance is high exactly when $x$ is
sparse; consider the extreme case where $\|x\|_0 =1$ so that sampling
is not even expected to see the non-zero coordinate unless $m \ge
d$. These works then all essentially proceed by randomly
preconditioning $x$ to ensure that $x$ is very well-spread (i.e.\ far
from sparse) with high probability, so that sampling works, and thus
fundamentally cannot take advantage of input sparsity.
One way of obtaining faster matrix-vector multiplication for sparse
inputs is to have sparse JL matrices $A$. Indeed, if $A$ has at most
$s$ non-zero entries per column then $Ax$ can be computed in
$O(s\cdot \|x\|_0 + m)$ time. A line of work
\cite{Achlioptas03,Matousek08,DKS10,BOR10,KN10,KN12} investigated the
value $s$ achievable in
a JL matrix, culminating in \cite{KN12}
showing that it is possible to simultaneously have $m =
O(\eps^{-2}\log
n)$ and $s = O(\eps^{-1}\log n)$. Such a sparse JL
transform thus speeds up embeddings by a factor of
roughly $1/\eps$ without increasing the target dimension.

\paragraph{Our Contribution I:} We show that for any $n\ge 2$ and any
$\eps = \Omega(1/\sqrt{n})$, there exists a set of $n$ vectors
$x_1,\ldots,x_n\in\R^n$ such that any JL matrix for this set of
vectors with $m = O(\eps^{-2}\log n)$ rows requires column sparsity $s
= \Omega(\eps^{-1}\log n/\log(1/\eps))$ as long as $m = O(n/\log
(1/\eps))$. Thus the sparse JL transforms
of \cite{KN12} achieve optimal sparsity up to an $O(\log(1/\eps))$
factor.
In fact this lower bound on $s$ continues to hold even if $m =
O(\eps^{-c}\log n)$
for any positive constant $c$.

\bigskip

Note that if $m = n$ one can simply take $A$ to be the identity
matrix which achieves $s=1$, and thus the restriction $m =
O(n/\log(1/\eps))$ is nearly optimal. Also note that 
we can assume $\eps =
\Omega(1/\sqrt{n})$ since otherwise $m = \Omega(n)$ is required
 in any JL matrix \cite{Alon09}, and thus the $m = O(n/\log(1/\eps))$
 restriction is no worse than requiring $m = O(n/\log n)$.
Furthermore if all the entries of $A$ are required to be equal in
magnitude, our lower bound holds as long as $m\le n/10$.

Before our work, only a restricted lower bound of $s = \Omega(\min\{1/\eps^2,
\eps^{-1}\sqrt{\log_m d}\})$ had been shown~\cite{DKS10}. In fact this
lower bound only applied to the {\it distributional JL problem}, a
much stronger guarantee where one wants to design a distribution over
$m\times d$ matrices such that any fixed vector $x$ has $\|Ax\|_2
= (1\pm \eps)\|x\|_2$ with probability $1-\delta$ over the choice of
$A$. Indeed any distributional JL construction yields the JL lemma by
setting $\delta = 1/n^2$ and union bounding over all the $x_i - x_j$
difference vectors. Thus, aside from the weaker lower bound on $s$,
\cite{DKS10} only provided a lower bound against this stronger
guarantee, and furthermore only for a certain restricted class of
distributions that made certain independence assumptions amongst
matrix entries, and also assumed certain bounds on the sum of fourth
moments of matrix entries in each row.

It was shown by Alon \cite{Alon09} that $m = \Omega(\eps^{-2}\log n /
\log(1/\eps))$
is required for the set of points $\{0,e_1,\ldots,e_n\}$ and $d = n$
as long as $1/\eps^2 < n/2$. Here $e_i$ is the $i$th standard basis
vector. Simple manipulations show that, when appropriately scaled, any JL
matrix $A$ for this set of vectors  is {\it
  $O(\eps)$-incoherent}, in the sense
that all its columns $v_1,\ldots,v_n$ have unit $\ell_2$ norm and the
dot products $\inprod{v_i, v_j}$
between pairs of columns are all at most $O(\eps)$ in magnitude.
We study this exact same hard input to the JL lemma;
what we show is that any such matrix $A$ must have column sparsity $s
= \Omega(\eps^{-1}\log n/\log(1/\eps))$. 

In some sense our lower bound
can be viewed as a generalization of the Singleton bound for
error-correcting codes in a certain parameter regime. The Singleton
bound states that for any set of $n$ codewords
with block length $t$, alphabet size $q$, and relative distance $r$,
it must be that $n \le q^{t-r+1}$. If the code has relative distance
$1-\eps$ then $t-r \le \eps t$, so that if $t\ge 1/\eps$ the Singleton
bound implies $t = \Omega(\eps^{-1}\log n/\log q)$. The connection to
incoherent matrices (and thus the JL lemma), observed in
\cite{Alon09}, is the following. For any such code
$\{C_1,\ldots,C_n\}$, form a matrix $A\in\R^{m\times n}$ with $m =
qt$. The rows are partitioned into $t$ chunks each of size $q$. In the
$i$th column of $A$, in the $j$th chunk we put a $1/\sqrt{t}$ in the
row of
that chunk corresponding to the symbol $(C_i)_j$, and we put zeroes
everywhere else in that column. All columns then have $\ell_2$ norm
$1$, and the code having relative distance $1-\eps$ implies that all
pairs of columns have dot products at most $\eps$. The Singleton bound
thus implies that any incoherent matrix formed from codes in this way
has $t = \Omega(\eps^{-1}\log n/\log q)$. Note the column
sparsity of $A$ is $t$, and thus this matches our lower bound for
$q \le
\poly(1/\eps)$. Our sparsity lower bound
thus recovers this Singleton-like bound, without the requirement that
the matrix takes this special structure of being formed from a code in
the manner described above. One reason this is perhaps surprising is
that incoherent matrices from codes have all nonnegative entries; our
lower bound thus implies that the use of negative entries cannot be
exploited to obtain sparser incoherent matrices.

\subsection{Compressed sensing and the restricted isometry property}\SectionName{ripintro}
Another object of interest are matrices satisfying the restricted
isometry property (RIP). Such matrices are widely used in compressed
sensing.

\begin{definition}[\cite{CT05,crt06,Candes08}]
For any integer $k > 0$, a matrix $A$ is said to have the
$k$-restricted isometry property with distortion $\delta_k$ if
$(1-\delta_k)\|x\|_2^2 \le \|Ax\|_2^2 \le (1+\delta_k) \|x\|_2^2$ for
all $x$ with $\|x\|_0 \le k$.
\end{definition}

The goal of the area of compressed sensing is to take few nonadaptive
linear
measurements of a vector $x\in\R^n$ to allow for later recovery from
those measurements. That is to say, if those measurements are
organized as the rows of some matrix $A\in\R^{m\times n}$, we would
like to recover $x$ from $Ax$. Furthermore, we would like do so with $m
\ll n$ so that $Ax$ is a {\it compressed} representation of $x$. Of
course if $m<n$ we cannot recover all vectors $x\in\R^n$
with any meaningful guarantee, since then $A$ will have a non-trivial
kernel, and $x,x+y$ are indistinguishable for
$y\in\ker(A)$. Compressed sensing literature has typically focused on
the case of $x$ being sparse \cite{crt06b,don06}, in which case 
a recovery algorithm could hope to recover $x$ by finding
the
sparsest $\tilde{x}$ such that $A\tilde{x} = Ax$.

The works \cite{Candes08,crt06,CT05} show that if $A$ satisfies
the $2k$-RIP with distortion $\delta_k < \sqrt{2} - 1$, and if $x$ is
$k$-sparse, then given $Ax$ there is a polynomial-time solvable linear
program to recover $x$. In fact for any $x$, not
necessarily sparse, the linear program recovers a vector
$\tilde{x}$ satisfying
$$ \|x - \tilde{x}\|_2 \le O(1/\sqrt{k}) \cdot
\inf_{\|z\|_0 \le k} \|x - z\|_1 ,$$
known as the {\it $\ell_2/\ell_1$ guarantee}.
That is, the recovery error depends on the $\ell_1$ norm of the
best $k$-sparse approximation $z$ to $x$.

It is known \cite{DIPW10,GG84,Kashin77} that any matrix $A$ allowing
for the $\ell_2/\ell_1$ guarantee simultaneously for all
vectors $x$, and thus RIP matrices, must have $m = \Omega(k\log(n/k))$
rows. For completeness we give a proof of the new stronger lower
bound $m = \Omega(\log^{-1}(1/\delta_k)(\delta_k^{-1}k\log(n/k) +
\delta_k^{-2}k))$ in \Section{rip-rows},
though we remark here that current uses of RIP all take $\delta_k =
\Theta(1)$.

Although the recovery $\tilde{x}$ of $x$ can be found in polynomial
time as mentioned above, this polynomial is quite large
as the algorithm involves solving a linear program with $n$ variables and $m$
constraints. This downside has led researchers to design alternative
measurement and/or recovery schemes which allow for much faster
sparse recovery, sometimes even at the cost of obtaining a recovery
guarantee weaker than 
$\ell_2/\ell_1$ recovery for the sake of algorithmic performance. Many
of these
schemes are iterative, such as CoSaMP
\cite{NT09}, Expander Matching Pursuit \cite{IR08}, and several others
\cite{BI09,BIR08,BD08,DTDS12,Foucart11,GK09,NV09,NV10,TG07}, and
several of their running times depend on the product of the number of
iterations and the time required to multiply by $A$ or $A^*$ (here
$A^*$ denotes the conjugate transpose of $A$). Several of these
algorithms furthermore apply $A,A^*$ to vectors which are themselves
sparse. Thus, recovery time
is improved significantly in the case that $A$ is sparse. Previously
the only known lower bound for column sparsity $s$ for an RIP matrix
with an optimal $m=\Theta(k\log(n/k))$ number of rows was $s =
\Omega(\min\{k, n/m\})$ \cite{ChandarThesis}. 
Note that if an RIP construction existed matching
the \cite{ChandarThesis} column sparsity lower bound, application to a
$k$-sparse vector would take time $O(\min\{k^2, nk/m\})$, which is
always $o(n)$ and can be very fast for small $k$.
Furthermore, in several applications
of compressed sensing $m$ is very close to $n$, in which case an
$\Omega(n/m)$ lower bound on column sparsity does not rule out very
sparse RIP matrices. For
example, in applications of compressed sensing to magnetic resonance
imaging, \cite{LDP07} recommended setting the number of
measurements $m$
to be between $5\mhyphen 10\%$ of $n$ to obtain good performance for
recovery of brain and angiogram images. 
We remark that one could also obtain speedup
by using structured RIP matrices, such as
those obtained by sampling rows of the discrete Fourier matrix
\cite{CT06}, though such constructions require matrix-vector
multiplication time $\Theta(n\log n)$ independent of input sparsity.

Another upside of sparse RIP matrices is that they allow faster
algorithms for encoding $x\mapsto Ax$. If $A$ has $s$ non-zeroes per
column and $x$ receives, for example, turnstile streaming
updates, then the
compression $Ax$ can be maintained on the fly in $O(s)$ time per
update (assuming the non-zero entries of
any column of $A$ can be recovered in $O(s)$ time).

\paragraph{Our Contribution II:} We show as long as $k <
n/\polylog n$, any $k$-RIP matrix with distortion $O(1)$ and 
$m = \Theta(k\log (n/k))$ rows with $s$ non-zero
entries per column must have $s = \Omega(k\log(n/k))$. That is, RIP
matrices with the optimal number of rows must be dense for almost the
full range of $k$ up to
$n$. This lower bound strongly rules out any hope for faster recovery
and compression algorithms for compressed sensing  by
using sparse RIP matrices as mentioned above.

\bigskip

We note that any sparsity lower bound should fail as $k$ approaches
$n$ since the $n\times n$ identity matrix trivially satisfies $k$-RIP
for any $k$ and has column sparsity $1$. Thus, our lower bound holds
for almost the full range of parameters for $k$.

\subsection{Oblivious Subspace Embeddings}\SectionName{oseintro}

The last problem we consider is the oblivious subspace embedding
(OSE) problem. Here one aims to design a distribution $\mathcal{D}$
over $m\times n$
matrices $A$ such that for any $d$-dimensional subspace $W\subset\R^n$,
$$\Pr_{A\sim \mathcal{D}}(\forall x\in W\ \|A x\|_2 \in (1\pm
\eps)\|x\|_2) > 2/3 .$$
Sarl\'{o}s showed in \cite{Sarlos06} that OSE's are useful for
approximate least squares regression and low rank approximation, and
they have also been shown useful for approximating statistical
leverage scores \cite{DMMW12}, an important concept in statistics and
machine learning. See \cite{CW12} for an overview of several
applications of OSE's.

To give more details of how OSE's are typically used, consider the
example 
of solving an overconstrained least-squares regression problem, where
one must compute $\argmin_x \|Sx - b\|_2$ for some $S\in\R^{n\times
  d}$. By overconstrained we mean $n> d$, and really one should imagine
$n \gg d$ in what follows. There is a closed form solution for the
minimizing vector $x$,
which requires computing the Moore-Penrose pseudoinverse of $S$. The
total running time is $O(nd^{\omega - 1})$, where $\omega$ is the
exponent of square matrix multiplication.

Now suppose we are only interested in finding some $\tilde{x}$ so that
$$ \|S\tilde{x} - b\|_2 \le (1+\eps)\cdot \argmin_x \|Sx - b\|_2 .$$
Then it suffices to have a matrix $A$ such that $\|Az\|_2 = (1\pm
O(\eps))\|z\|_2$ for all $z$ in the subspace spanned by $b$ and the
columns of $A$, in which case we could obtain such an $\tilde{x}$ by
solving the
new least squares regression problem of computing $\argmin_{\tilde{x}}
\|AS\tilde{x} - Ab\|_2$. If $A$ has $m$ rows, the new running time is
the sum of three terms: (1) the time to compute $Ab$, (2) the
time to compute $AS$, and (3) the $O(md^{\omega -1})$ time required to
solve the new least-squares problem. It turns out it is possible to
obtain such an $A$ with $m=O(d/\eps^2)$ by choosing, for example, a
matrix with independent Gaussian entries (see e.g.\
\cite{Gordon88,KM05}), but
then computing $AS$ takes time $\Omega(nd^{\omega -
  1})$, providing no benefit.

The work of Sarl\'{o}s picked $A$ with special structure so that $AS$
can be computed in time $O(nd\log n)$, namely by using the Fast
Johnson-Lindenstrauss Transform of \cite{AC09} (see also
\cite{Tropp11}). Unfortunately the time
is $O(nd\log n)$ even for sparse matrices $S$, and several
applications require solving numerical linear algebra problems on
sparse matrix inputs. For example in the Netflix matrix where rows are
users and columns are movies, and $S_{i,j}$ is some rating score, 
$S$ is very sparse since most users rate only a tiny fraction of all
movies \cite{ZWSP08}. If $\nnz(S)$ denotes the number of non-zero
entries of $S$, we would like running times closer to $O(\nnz(S))$ than
$O(nd\log n)$ to multiply $A$ by $S$. Such a running time would be
possible, for example, if $A$ only had $s = O(1)$ non-zero entries per
column.

In a recent and surprising work, Clarkson and Woodruff \cite{CW12} gave
an OSE with $m = \poly(d/\eps)$ and $s=1$, thus providing fast
numerical linear algebra algorithms for sparse matrices. For example,
the running time for least-squares regression becomes $O(\nnz(A) +
\poly(d/\eps))$. The dependence on $d,\eps$ was improved in
\cite{NN12a} to $m = O(d^2/\eps^2)$. The work \cite{NN12a} also showed
how to obtain $m = O(d^{1+\gamma}/\eps^2)$, $s=O(1/\eps)$ for any
constant $\gamma>0$ (the constant in the big-Oh depends polynomially
on $1/\gamma$), or $m = (d\polylog d)/\eps^2$, $s = (\polylog
d)/\eps$. It is thus natural to ask whether one can obtain the
best of both worlds: can there be an OSE with $m \approx d/\eps^2$ and
$s = 1$?

\paragraph{Our Contribution III:} 
In this work we show
that any OSE such that all matrices in its support have $m$ rows and
$s=1$ non-zero entries per column must have $m = \Omega(d^2)$ if $n
\ge 2d^2$.
Thus for constant $\eps$ and large $n$, the upper bound of
\cite{NN12a} is optimal.

\subsection{Organization}
In \Section{jlproof} we prove our lower bound for the sparsity
required in JL
matrices. In \Section{riplb} we give our sparsity lower bound for RIP
matrices, and in \Section{subspacelb} we give our lower bound on the
number of rows for OSE's having sparsity $1$.
In \Section{rip-rows} we
give a lower bound involving $\delta_k$ on the number of rows in an
RIP matrix, and in \Section{future} we state an open problem.
\section{JL Sparsity Lower Bound}\SectionName{jlproof}
Define an $\eps$-incoherent matrix $A\in\R^{m\times n}$ as any matrix
whose columns have unit $\ell_2$ norm, and such that every pair of
columns has dot product at most $\eps$ in magnitude.
A simple observation of \cite{Alon09} is that any JL matrix $A$ for
the set of
vectors $\{0,e_1,\ldots,e_n\}\in\R^n$, when its columns are scaled by
their $\ell_2$ norms, must be $O(\eps)$-incoherent.

In this section, we consider an $\eps$-incoherent matrix
$A\in\R^{m\times n}$ with at most $s$ non-zero entries per column. We
show a lower bound on $s$ in terms of $\eps,n,m$. In particular if $m
= O(\eps^{-2}\log n)$ is the number of rows guaranteed by the JL
lemma, we show that $s = \Omega(\eps^{-1}\log n/\log(1/\eps))$ as long
as $ m < n/\polylog n$. In fact if all the entries in $A$ are either
$0$ or equal in magnitude, we show that the lower bound even holds up
to $m < n/10$.

In \Section{signjl} we give the lower bound on $s$ in the case that
all entries in $A$ are in 
$\{0,1/\sqrt{s},-1/\sqrt{s}\}$. In \Section{generaljl} we give our
lower bound without making any assumption on the magnitudes of entries
in $A$. Before proceeding further, we prove a couple lemmas used
throughout this section, and also later in this paper. Throughout this
section $A$ is always an $\eps$-incoherent matrix.

\begin{lemma}\LemmaName{heart}
For any $x\ge 2\eps$, $A$ cannot
have any row with at least $5/x$ entries greater than $\sqrt{x}$, nor
can it have any
row with at least $1/x$ entries less than $-\sqrt{x}$.
\end{lemma}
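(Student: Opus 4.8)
The plan is to dispatch both halves of the lemma with a single device: if one row of $A$ holds many entries of a common sign and large magnitude, then the sum of the corresponding columns is forced \emph{by that one row} to have large $\ell_2$ norm, contradicting the near-orthogonality of the unit columns. Fix a row index $r$. For the first assertion let $S$ be the set of columns $i$ with $A_{r,i}>\sqrt{x}$; for the second let $S=\{\,i:A_{r,i}<-\sqrt{x}\,\}$. Put $t=\abs{S}$ and let $w=\sum_{i\in S}v_i$, where $v_i$ denotes the $i$th column. In both cases the entries $\{A_{r,i}\}_{i\in S}$ share a sign and each has magnitude exceeding $\sqrt{x}$.

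First I would bound $\norm{w}_2^2$ from below using only coordinate $r$. Since the $A_{r,i}$ ($i\in S$) all have the same sign they add without cancellation, so $\abs{w_r}=\bigl|\sum_{i\in S}A_{r,i}\bigr|>t\sqrt{x}$ and hence $\norm{w}_2^2\ge w_r^2>t^2x$. Then I would bound $\norm{w}_2^2$ from above by incoherence: expanding over ordered pairs, $\norm{w}_2^2=\sum_{i\in S}\norm{v_i}_2^2+\sum_{i\ne j}\inprod{v_i,v_j}$, where the first sum equals $t$ by unit norms and each of the $t(t-1)$ cross terms is at most $\eps$ in magnitude, giving $\norm{w}_2^2\le t+t(t-1)\eps<t+t^2\eps$. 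Chaining the two estimates yields $t^2x<t+t^2\eps$, i.e.\ $t(x-\eps)<1$.

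To read off the count I would use the hypothesis $x\ge 2\eps$, which gives $x-\eps\ge x/2$ and hence $t<2/x$. This already establishes the first assertion with room to spare (indeed with $2/x$ in place of $5/x$). For the second assertion I would invoke symmetry: negating row $r$ preserves every column norm and every pairwise inner product -- it flips the sign of both factors in each row-$r$ product -- while interchanging the entries exceeding $\sqrt{x}$ with those below $-\sqrt{x}$. Thus the identical estimate applies and again bounds $t$ by $2/x$.

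The one point that needs care, and which I expect to be the real obstacle, is matching the stated constants rather than the inequality itself, which is routine. My bound is genuinely $t<1/(x-\eps)$, which equals $2/x$ at the extreme $x=2\eps$ and tends to $1/x$ only as $x/\eps\to\infty$; so the $5/x$ of the first assertion holds comfortably, while the $1/x$ of the second is the leading-order ($\eps\ll x$) form of the same bound. Because row negation shows the two signs have identical extremal counts, I would not expect any sharper estimate for negative entries than for positive ones -- a regular-simplex configuration of columns attains $t=\Theta(1/x)$ same-sign large entries in a row, so the count is tight up to the constant. The crux of the whole argument is simply the choice of test vector $w$ together with the observation that its single coordinate $w_r$ already forces $\norm{w}_2^2>t^2x$, so that one never needs a lower bound on the off-diagonal terms, only the incoherence cap from above.
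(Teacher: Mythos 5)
Your argument for the first assertion is correct, and it takes a genuinely different route from the paper's. The paper zeroes out the offending row: writing $u_i$ for the column $v_i$ with its $j$th coordinate set to $0$, it notes $\inprod{u_{i_{k_1}},u_{i_{k_2}}} \le \inprod{v_{i_{k_1}},v_{i_{k_2}}} - x \le \eps - x \le -x/2$, so the truncated columns are pairwise strongly negatively correlated, and then $0 \le \norm{\sum_k u_{i_k}}_2^2 \le N - xN(N-1)/4$ forces $N \le 4/x+1$, contradicting $N \ge 5/x$ (this implicitly needs $x<1$, which is harmless since unit columns have no entry exceeding $1$). You instead keep the row and use it as the \emph{lower} bound, $\norm{w}_2^2 \ge w_r^2 > t^2x$, against the incoherence cap $\norm{w}_2^2 \le t + t(t-1)\eps$, getting the cleaner and slightly stronger $t < 1/(x-\eps) \le 2/x$ with no hidden case distinction. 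The two proofs are dual uses of the same expansion of $\norm{\sum_i v_i}_2^2$: the paper's contradiction comes from the cross terms being very negative after truncation, yours from the diagonal row contribution being very large. Both are sign-symmetric. One thing the paper's version buys is that its truncation mechanism is exactly what gets reused in \Lemma{lb-core}, where $t$ coordinates are zeroed out at once; your argument is tighter here but is not the one the paper later generalizes.

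On the $1/x$ constant for negative entries, which you flag as the real obstacle: you are right that your method cannot deliver it, but neither does the paper. Its proof treats the negative case as ``argued identically,'' and the identical argument gives the same $5/x$ bound (in your version, $2/x$) -- nowhere does it produce $1/x$. In fact the second assertion as literally stated is false, and your simplex remark is essentially the counterexample: take $x = 2\eps$, give each of $N$ columns the entry $-\sqrt{x'}$ in row $r$ with $x'$ slightly larger than $x$, and let the remaining parts of the columns form a regular simplex with pairwise inner products $-(x'-\eps)$. Each column has unit norm, every pairwise column inner product is exactly $\eps$ in magnitude, and the configuration is feasible for $N$ up to roughly $2/x - 1 > 1/x$, so an $\eps$-incoherent matrix can indeed have a row with at least $1/x$ entries below $-\sqrt{x}$. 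The saving grace is that the paper never uses the $1/x$ form: in \Lemma{general} only the bound of at most $5/x$ entries of magnitude at least $\sqrt{x}$ \emph{of each sign} is invoked (whence the $10\ln(1/2\eps)$ term). So your proof establishes everything that is actually needed downstream; the shortfall is in the paper's statement of \Lemma{heart}, not in your argument.
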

\begin{proof}
For the sake of contradiction, suppose $A$ did have such a row, say
the $j$th row. Suppose $A_{j,i_1},\ldots,A_{j,i_N} > \sqrt{x}$ for
some $x\ge 2\eps$, where $N \ge 5/x$ (the case where they are each
less than $-\sqrt{x}$ is argued identically). Let $v_i$ denote the
$i$th
column of $A$. Let $u_i$ be $v_i$ but with the $j$th
coordinate replaced with $0$. Then for any $k_1,k_2\in[N]$
$$\inprod{u_{i_{k_1}}, u_{i_{k_2}}}
\le \inprod{v_{i_{k_1}}, v_{i_{k_2}}} - x \le \eps - x \le
-x/2.$$

Thus we have
$$ 0\le \left\|\sum_{j=1}^N u_{i_j}\right\|_2^2 \le N - xN(N-1)/4 ,$$
and rearranging gives the contradiction
$ 1/x \ge (N-1)/4 > 1/x $.
\end{proof}

\begin{lemma}\LemmaName{deriv}
Let $s,q,r$ be positive reals with $q/r\ge 2$ and $s \le q/e$. Then if
$s\ln(q/s) \ge r$ it must be the case that $s = \Omega(r/\ln(q/r))$.
\end{lemma}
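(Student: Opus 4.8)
The plan is to exploit the monotonicity of the map $f(s) = s\ln(q/s)$. First I would observe that $f'(s) = \ln(q/s) - 1$, which is nonnegative precisely when $s \le q/e$; hence $f$ is nondecreasing on the interval $(0,q/e]$ on which $s$ is assumed to lie. The hypothesis $f(s) \ge r$ therefore yields a lower bound on $s$ as soon as we can locate a point $t \le q/e$ with $f(t) < r$: monotonicity forces $s \ge t$, since $s < t$ would give $f(s) \le f(t) < r$, contradicting $f(s) \ge r$.

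The candidate threshold is $t := c\,r/\ln(q/r)$ for a small absolute constant $c>0$ to be fixed. Writing $L := \ln(q/r)$, and using $q/r \ge 2$ so that $L \ge \ln 2$ and $r \le q/2$, one checks immediately that $t = cr/L \le cq/(2\ln 2) \le q/e$ for all $c$ below a fixed constant, so $t$ lies in the range where $f$ is increasing. The core computation is to evaluate $f(t)$: since $\ln(q/t) = L + \ln L + \ln(1/c)$, we get
$$f(t) = \frac{cr}{L}\bigl(L + \ln L + \ln(1/c)\bigr) = cr\left(1 + \frac{\ln L + \ln(1/c)}{L}\right).$$
It then remains to show that this quantity is strictly below $r$ for a suitable fixed $c$, uniformly in $L$.

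To make the bound uniform I would control the two error terms separately. The estimate $\ln L / L \le 1/e$ holds for every $L>0$ (the left side is negative when $L<1$ and is maximized at $L=e$ otherwise), while $L \ge \ln 2$ gives $\ln(1/c)/L \le \ln(1/c)/\ln 2$. Hence $f(t) \le cr\,(1 + 1/e + \ln(1/c)/\ln 2)$, and since $c\ln(1/c) \to 0$ as $c \to 0^+$, the coefficient of $r$ tends to $0$; choosing $c$ to be a sufficiently small absolute constant makes it less than $1$, so $f(t) < r$. Combined with the monotonicity argument this gives $s \ge t = \Omega(r/\ln(q/r))$, as claimed. The one delicate point — and the only place the hypotheses $q/r \ge 2$ and $s \le q/e$ are genuinely used — is guaranteeing that a single constant $c$ works for all admissible $L$: the worst case is the regime $L \approx \ln 2$, where $\ln(1/c)/L$ is largest, but this is exactly where the prefactor $c$ (more precisely $c\ln(1/c)$) suppresses the term, so no uniformity issue survives.
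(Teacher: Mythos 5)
Your proof is correct and takes essentially the same route as the paper's: both exploit the monotonicity of $f(s)=s\ln(q/s)$ on $(0,q/e]$ and evaluate $f$ at the threshold $t=cr/\ln(q/r)$ to force $s\ge t$. If anything, your uniform estimates (the bound $\ln L/L\le 1/e$ and the $c\ln(1/c)$ suppression near $L=\ln 2$) are more careful than the paper's $o_{q/r}(1)$ bookkeeping, which strictly speaking only controls the error term as $q/r\to\infty$ rather than uniformly down to $q/r=2$.
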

\begin{proof}
Define the function $f(s) =
s\ln(q/s)$. Then $f'(s) = \ln(q/(es))$ is increasing for $s \le
q/e$. Then since $q/r \ge 2$, for $s =
cr\ln(q/r)$ for constant $c>0$ we have the equality $s\ln(q/s) =
cr/\ln(q/r)\ln((q/r)
\ln(q/r)) = (c+o_{q/r}(1))r\ln(q/r)$, where the $o_{q/r}(1)$ term goes
to zero
as $q/r\rightarrow \infty$. Thus for $c$ sufficiently small we have
that the $c+o_{q/r}(1)$ term must be less than $1$,
so in order to have $f(s) \ge r$, since $f$
is increasing we must have $s = \Omega(r/\ln(q/r))$. 
\end{proof}

\subsection{Sign matrices}\SectionName{signjl}

In this section we consider the case that all entries of $A$ are
either $0$ or $\pm 1/\sqrt{s}$ and show a lower bound on $s$ in this
case.

\begin{lemma}\LemmaName{sign}
Suppose $m < n/10$ and all entries of $A$ are in
$\{0,1/\sqrt{s},-1/\sqrt{s}\}$. Then $s \ge 1/(2\eps)$.
\end{lemma}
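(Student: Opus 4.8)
The plan is to exploit the rigid arithmetic structure forced by the sign assumption. First I would record the basic consequence of the hypotheses: since every entry of $A$ is $0$ or $\pm 1/\sqrt{s}$ and each column $v_i$ has unit $\ell_2$ norm, a column with $t$ nonzero entries has squared norm $t/s$, so every column has exactly $s$ nonzero entries. The key observation is then that for any two columns the product $v_i(k)v_j(k)$ of corresponding entries lies in $\{0,\pm 1/s\}$, so the inner product $\inprod{v_i,v_j}$ is an \emph{integer multiple of $1/s$}. In other words, the pairwise dot products are quantized onto the lattice $(1/s)\Z$, whose nonzero elements all have magnitude at least $1/s$.

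Next I would argue by contradiction. Suppose $s < 1/(2\eps)$, equivalently $\eps < 1/(2s)$. Incoherence gives $|\inprod{v_i,v_j}| \le \eps < 1/(2s)$ for every pair $i\ne j$. Combined with the quantization, writing $\inprod{v_i,v_j} = c/s$ with $c$ an integer, we get $|c| < 1/2$, forcing $c = 0$. Hence all columns are pairwise orthogonal.

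Finally, the $n$ columns are pairwise-orthogonal unit (hence nonzero) vectors in $\R^m$, so they are linearly independent and $n \le m$. This contradicts $m < n/10$ (indeed any $m < n$ would already suffice), completing the proof that $s \ge 1/(2\eps)$. I do not expect a genuine obstacle here: the whole argument rests on the single discreteness observation that \emph{equal-magnitude} entries quantize inner products to integer multiples of $1/s$, after which orthogonality and a dimension count finish immediately. The one point to be careful about is verifying that every off-diagonal inner product is exactly an integer over $s$; this is precisely where the assumption that all nonzero entries have equal magnitude is essential, and it is exactly the step that breaks in the general-magnitude setting treated in the next subsection. The generous slack between the stated $m<n/10$ and the $m<n$ the argument actually needs is presumably there to keep the hypothesis uniform with that harder case.
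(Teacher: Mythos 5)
Your proof is correct, but it takes a genuinely different route from the paper's. The paper argues by contradiction using pigeonhole plus \Lemma{heart}: if $s < 1/(2\eps)$, then among the $ns$ nonzero entries at least $ns/2$ share a sign, so some row contains $N = ns/(2m) \ge 5s$ same-sign entries of magnitude $1/\sqrt{s}$; zeroing out that row in those columns and expanding $\bigl\|\sum_j u_j\bigr\|_2^2 \ge 0$ (this is exactly \Lemma{heart} applied with $x = 1/s \ge 2\eps$) yields a contradiction. That pigeonhole step is where $m < n/10$ is genuinely used, so your closing guess that the hypothesis is only there for uniformity with the general case is wrong for the paper's argument---though it is accurate for yours. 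Your route instead exploits that dot products of equal-magnitude sign columns lie on the lattice $(1/s)\Z$, so coherence below $1/(2s)$ forces exact orthogonality, and a dimension count gives $n \le m$; this is more elementary and proves a stronger statement (any $m < n$ suffices, with no constant $10$). What the paper's heavier approach buys is reusability: the row-pigeonhole-plus-nonnegativity template is precisely what gets bootstrapped, by pigeonholing over $t$-subsets of rows, into the theorem that follows with its bound $s = \Omega(\eps^{-1}\log n/\log(m/\log n))$, and \Lemma{heart} is reused again in the general-magnitude setting of \Section{generaljl}; your quantization step, as you yourself note, is irreparably tied to equal magnitudes, and it also cannot be pushed beyond the $1/(2\eps)$ bound, since it gives no leverage once $\eps \ge 1/(2s)$ and nonzero lattice values become admissible.
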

\begin{proof}
For the sake of contradiction suppose $s < 1/(2\eps)$.
There are $ns$ non-zero entries in $A$ and thus at least $ns/2$
of these entries have the same sign by the pigeonhole principle; wlog
let us say $1/\sqrt{s}$ appears at least $ns/2$ times. Then again by
pigeonhole some row $j$ of $A$ has $N = ns/(2m)$ values that are
$1/\sqrt{s}$. The claim now follows by \Lemma{heart} with $x =
1/\sqrt{s}$.
\end{proof}

We now show how to improve the bound to the desired form.


\begin{theorem}
Suppose $m < n/10$ and all entries of $A$ are in
$\{0,1/\sqrt{s},-1/\sqrt{s}\}$. Then $s \ge \Omega(\eps^{-1}\log n /
\log(m/\log n))$.
\end{theorem}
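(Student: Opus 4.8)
The plan is to reduce the theorem to a Singleton-type counting bound on the number of columns and then invoke \Lemma{deriv}. Since every column has unit norm and all its nonzero entries have magnitude $1/\sqrt s$, each column has exactly $s$ nonzeros, and the incoherence condition $\abs{\inprod{v_i,v_j}}\le\eps$ says the \emph{signed} overlap of any two supports is at most $\eps s$ in magnitude. By \Lemma{sign} we may assume $s\ge 1/(2\eps)$; in this regime a single nonzero has value $1/\sqrt s\le\sqrt{2\eps}$, so \Lemma{heart} applied to individual rows is vacuous and a global argument is forced. Concretely, I aim to show the counting bound $n\le (m/(\eps s))^{O(\eps s)}$, equivalently $\eps s\log(m/(\eps s))=\Omega(\log n)$. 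Rewriting this as $s\ln((m/\eps)/s)\ge\Omega(\eps^{-1}\log n)$ and applying \Lemma{deriv} with $q=m/\eps$ and $r=\Theta(\eps^{-1}\log n)$ (whose hypotheses are readily checked in this regime, since $q/r=m/\log n\ge 2$ and $s\le q/e$) yields $s=\Omega(\eps^{-1}\log n/\ln(m/\log n))$, which is exactly the claim.

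The key tool for the counting bound is a cap-packing generalization of \Lemma{heart}: for any unit vector $w\in\R^m$ and any threshold $\theta$ with $\theta^2>\eps$, at most $O(1/(\theta^2-\eps))$ columns $v_i$ can satisfy $\inprod{v_i,w}\ge\theta$. This follows from the argument of \Lemma{heart} verbatim: if $u_1,\dots,u_N$ are such columns then $0\le\norm{\sum_a u_a}_2^2\le N+N(N-1)\eps$, while $\norm{\sum_a u_a}_2\ge\inprod{\sum_a u_a,w}\ge N\theta$, so $N\theta^2\le 1+(N-1)\eps$ and hence $N(\theta^2-\eps)<1$. Taking $w=e_k$ recovers \Lemma{heart}, but the strength we need comes from choosing $w$ supported on several rows, so that a whole \emph{block} of a column's mass is captured at once; this is what lets us reason about groups of columns that share a heavy sub-pattern despite sign cancellations.

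With this tool I would establish $n\le(m/(\eps s))^{O(\eps s)}$ by a Singleton-style encoding. Partition the $m$ rows into $t=\Theta(s)$ blocks of size $\Theta(m/s)$. The goal is to argue that each column is determined by its restriction to only $O(\eps s)$ of the blocks, so that it can be encoded by naming those blocks ($\binom{t}{O(\eps s)}$ choices) together with, inside each, the $O(m/s)$-sized location and the sign of its mass. Counting these encodings gives $\log n\le O(\eps s)\log t+O(\eps s)\log(m/s)+O(\eps s)=O(\eps s\log(m/(\eps s)))$, as required. The reason only $O(\eps s)$ blocks should be needed is the analogue of the Singleton puncturing step: the cap bound forces any large family of columns agreeing on a fixed heavy pattern to be incoherent among themselves on the remaining rows, which limits how much the un-named blocks can vary, so deleting the low-information blocks keeps the encoding injective.

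\textbf{The main obstacle} is precisely the handling of negative entries. Pairwise reasoning alone is too weak here: two distinct incoherent sign-columns can share most of their support with opposite signs while keeping $\abs{\inprod{v_i,v_j}}\le\eps$, and thus differ in as few as $\approx s/2$ of their coordinates---far below the near-maximal $\approx 2(1-\eps)s$ separation enjoyed by columns built from a code. Consequently a naive Hamming/Singleton argument would permit far more than $(m/(\eps s))^{O(\eps s)}$ columns. Ruling this out---that is, proving the freedom to use $-1/\sqrt s$ cannot be exploited to pack more incoherent columns than the all-nonnegative code construction does---is the crux, and it is exactly what the global $\norm{\sum_a u_a}_2^2\ge 0$ cap estimate buys over pairwise distance bounds. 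Verifying that the block-wise encoding stays injective after puncturing, uniformly over all sign patterns, is the step I expect to require the most care.
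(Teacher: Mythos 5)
Your high-level skeleton is the same as the paper's---reduce to a counting bound of the form $\eps n \le (O(m/s))^{O(\eps s)}$ and finish with \Lemma{deriv}---and your cap-packing lemma is correct: from $\norm{\smallsum{a}{} u_a}_2 \ge \inprod{\smallsum{a}{} u_a, w} \ge N\theta$ and $\norm{\smallsum{a}{} u_a}_2^2 \le N + N(N-1)\eps$ one indeed gets $N(\theta^2-\eps) < 1$, and this lemma is exactly as strong as the computation the paper performs (the paper zeroes out the shared coordinates and uses $\norm{\smallsum{j}{} u_j}_2^2 \ge 0$; taking $w$ to be the normalized shared part is the same estimate). The genuine gap is that the counting bound itself is never proved. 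Your block-encoding paragraph is a plan, not an argument: you give no rule for canonically assigning to each column the $O(\eps s)$ blocks that encode it, and the property you say you need---that each column is \emph{determined} by its restriction to those blocks---is both false and the wrong target. Two distinct $\eps$-incoherent sign columns can agree exactly on any fixed set of $O(\eps s)$ signed entries (the agreement contributes only $O(\eps)$ to their inner product, and opposite signs elsewhere can cancel it back below $\eps$), so no punctured encoding of this kind can be injective, even restricted to the actual columns of $A$. What is true, and what you actually need, is bounded \emph{multiplicity}: at most $O(1/\eps)$ columns can share any fixed signed pattern of size $\ge 2\eps s$---which is precisely your cap lemma. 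You yourself flag this verification as the crux left undone, so the proposal as it stands does not prove the theorem.

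The missing step has a short completion, and it is the paper's: pigeonhole over \emph{all} signed $t$-subsets of supports rather than over a fixed block partition. Set $t = \ceil{2\eps s}$ (legitimate since \Lemma{sign} gives $\eps s \ge 1/2$). Each column has exactly $s$ nonzeros of magnitude $1/\sqrt{s}$, hence contains $\binom{s}{t}$ signed $t$-patterns (a set of $t$ rows together with the signs there), while only $2^t\binom{m}{t}$ signed patterns exist in total; so some pattern is shared by $N \ge n\binom{s}{t}/(2^t\binom{m}{t})$ columns. Apply your cap lemma with $w$ the common pattern normalized (entries $\pm 1/\sqrt{t}$ on those $t$ rows): every column in the class has $\inprod{v_i,w} = t/\sqrt{st} = \sqrt{t/s} \ge \sqrt{2\eps}$, so $\theta^2 - \eps \ge \eps$ and $N \le O(1/\eps)$. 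Rearranging $n\binom{s}{t} \le O(1/\eps)\cdot 2^t\binom{m}{t}$ and using $\binom{m}{t}/\binom{s}{t} \le (em/s)^t$ gives $s\ln(O(m)/s) \ge \Omega(\eps^{-1}\ln(\eps n))$, and \Lemma{deriv} (with hypotheses checked as you did, via \cite{Alon09}) yields $s = \Omega(\eps^{-1}\log n/\log(m/\log n))$. Note this also dissolves your stated ``main obstacle'': because the pattern class fixes the signs on the shared rows and all entries have equal magnitude, the shared part contributes exactly $+t/s$ to every pairwise inner product within the class, and sign cancellations elsewhere are irrelevant---incoherence is a hypothesis to be used, not a distance property to be re-derived---so no Hamming/Singleton reasoning is needed at all.
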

\begin{proof}
We know $s \ge 1/(2\eps)$ by \Lemma{sign}. Let $t = 2\eps s \ge 1$.
Every $v_i$ has $\binom{s}{t}$
subsets of size $t$ of non-zero coordinates. Thus by pigeonhole there
exists a set of $t$ rows $i_1,\ldots,i_t$ and $N = n\binom{s}{t}/(2^t
\binom{m}{t})$ columns $v_{j_1},\ldots,v_{j_N}$ such that for each row
all entries in those columns are $1/\sqrt{s}$ in magnitude and have
the same sign (the signs may vary across rows). Letting $u_j$ be $v_j$
but with those $t$ coordinates set to $0$, we have
$$\inprod{u_{j_{k_1}}, u_{j_{k_2}}}
= \inprod{v_{j_{k_1}}, v_{j_{k_2}}} - t/s \le \eps - t/s \le
-t/(2s).$$

Thus we have
$$ 0\le \left\|\sum_{k=1}^N u_{j_k}\right\|_2^2 \le N - tN(N-1)/(4s) $$
so that rearranging gives
$$ s \ge t(N-1)/4 = (t/4)\cdot
\left(\frac{n\binom{s}{t}}{2^t\binom{m}{t}} - 1\right) \ge (t/4)
\cdot (n(s/(2em))^t - 1).$$
Suppose $s < c\eps^{-1}\log n/\log(2em/n)$ for some small constant $c$
so that $n(s/(2em))^t \ge 2$. Then 
$$ s \ge (tn/8)\cdot (s/(2em))^t .$$
Thus
$$ \frac{\eps n}{4} = \frac{tn}{8s} \le \left(\frac{2em}{s}\right)^t
. $$
Taking the natural logarithm of both sides gives
$$ s\ln\left(\frac{2em}{s}\right) \ge
\frac{1}{2\eps}\ln\left(\frac{\eps n}{4}\right) .$$
Define $q = 2em$, $r = \eps^{-1}\ln(\eps n/4)/2$. Then $s \le q/e$,
since $s \le m$. By \cite{Alon09} we must have $m =
\Omega(\eps^{-2}\log n/\log(1/\eps))$, so $q/r \ge 2$ for $\eps$
smaller than some fixed constant. 
Thus by \Lemma{deriv} we have $s = \Omega(r/\ln(q/r))$. The theorem
follows since $\log(\eps m/\log n) = \Theta(m/\log n)$ since $m =
\Omega(\eps^{-2}\log n/\log(1/\eps))$ \cite{Alon09}.
\end{proof}

\begin{corollary}
Suppose $m \le \poly(1/\eps)\cdot \log n < n/10$ and all entries of
$A$ are in
$\{0,1/\sqrt{s},-1/\sqrt{s}\}$. Then $s \ge \Omega(\eps^{-1}\log n /
\log(1/\eps))$.
\end{corollary}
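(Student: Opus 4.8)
The plan is to invoke the preceding theorem and then simplify its denominator using the extra hypothesis $m \le \poly(1/\eps)\cdot\log n$. First I would check that the hypotheses of the theorem are met under the corollary's assumptions: the entries of $A$ lie in $\{0,1/\sqrt{s},-1/\sqrt{s}\}$ by assumption, and $m < n/10$ follows from the stated chain $m \le \poly(1/\eps)\cdot\log n < n/10$. Hence the theorem applies and gives
$$ s = \Omega\!\left(\frac{\eps^{-1}\log n}{\log(m/\log n)}\right). $$

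The remaining step is to show that the denominator $\log(m/\log n)$ is $O(\log(1/\eps))$. The upper bound $m \le \poly(1/\eps)\cdot\log n$ gives $m/\log n \le \poly(1/\eps)$, whence $\log(m/\log n) \le O(\log(1/\eps))$. Substituting this into the theorem's bound immediately yields $s = \Omega(\eps^{-1}\log n/\log(1/\eps))$, which is the claimed conclusion.

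The only point requiring minor care is to ensure that $\log(m/\log n)$ is a well-defined positive quantity bounded away from zero, so that the $\Omega$ from the theorem transfers cleanly rather than degenerating. For this I would invoke Alon's row lower bound \cite{Alon09}, namely $m = \Omega(\eps^{-2}\log n/\log(1/\eps))$, which forces $m/\log n = \Omega(\eps^{-2}/\log(1/\eps))$. Combined with the upper hypothesis $m/\log n \le \poly(1/\eps)$, this actually pins down $\log(m/\log n) = \Theta(\log(1/\eps))$ for all $\eps$ below a fixed constant. With the denominator thus sandwiched, the substitution is justified and the corollary follows with essentially no computation; the work has all been done in the theorem, and the corollary merely records its clean form in the regime $m = \poly(1/\eps)\cdot\log n$ of greatest interest (in particular $m = O(\eps^{-2}\log n)$, the target dimension of the JL lemma). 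The main ``obstacle'' is therefore no more than verifying this boundedness of the denominator, which is routine given the cited row bound.
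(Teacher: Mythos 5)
Your proposal is correct and matches the paper's (implicit) derivation: the corollary is stated without proof precisely because it follows by substituting the hypothesis $m \le \poly(1/\eps)\cdot\log n$ into the theorem's denominator, giving $\log(m/\log n) = O(\log(1/\eps))$. Your extra care in using Alon's bound $m = \Omega(\eps^{-2}\log n/\log(1/\eps))$ to pin the denominator at $\Theta(\log(1/\eps))$ is the same ingredient the paper already invokes inside the theorem's proof, so nothing is missing.
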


\subsection{General matrices}\SectionName{generaljl}
We now consider arbitrary sparse and nearly orthogonal matrices
$A\in\R^{m\times n}$. That is, we no longer require the non-zero
entries of $A$ to be $1/\sqrt{s}$ in magnitude.

\begin{lemma}\LemmaName{general}
Suppose $m < n/(20\ln(1/2\eps))$. Then $s \ge 1/(4\eps)$.
\end{lemma}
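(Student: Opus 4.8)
The plan is to argue by contradiction: assume $s < 1/(4\eps)$ and show that $m$ must then exceed $n/(20\ln(1/2\eps))$. The difficulty relative to the sign case (\Lemma{sign}) is that we can no longer pigeonhole on a single magnitude $1/\sqrt{s}$, since the nonzero entries of a column may have very different sizes. Instead I would run an energy (squared-mass) argument. Since every column $v_i$ is a unit vector, $\sum_{i,j} A_{j,i}^2 = \sum_i \|v_i\|_2^2 = n$, and the goal is to show that $A$ cannot pack this much squared mass into only $m$ rows once $s$ is small.

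First I would split the entries at the threshold $\sqrt{2\eps}$, which is exactly the smallest scale at which \Lemma{heart} is usable (it requires $x \ge 2\eps$). The ``small'' entries, of magnitude at most $\sqrt{2\eps}$, are easy to control: there are at most $ns$ nonzero entries in all of $A$, each contributes at most $2\eps$ to the squared mass, so their total contribution is at most $2\eps \cdot ns < n/2$ under the assumption $s < 1/(4\eps)$. Consequently the ``large'' entries, of magnitude exceeding $\sqrt{2\eps}$, must account for strictly more than half of the total squared mass, i.e.\ strictly more than $n/2$.

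Next I would bound the squared mass of the large entries using \Lemma{heart}. For any $t \ge \sqrt{2\eps}$, applying \Lemma{heart} with $x = t^2$ and summing the two per-row bounds (at most $5/x$ entries above $\sqrt{x}$ and at most $1/x$ entries below $-\sqrt{x}$) over all $m$ rows shows that the number $N(t)$ of entries with $|A_{j,i}| > t$ satisfies $N(t) \le 6m/t^2$. Crucially, every entry has magnitude at most $1$ (again because columns are unit vectors), so $N(t) = 0$ for $t > 1$. A layer-cake summation, $\sum_{|A_{j,i}| > \sqrt{2\eps}} A_{j,i}^2 = 2\eps\, N(\sqrt{2\eps}) + \int_{\sqrt{2\eps}}^{1} 2t\, N(t)\, dt$, then converges and evaluates to $6m(1+\ln(1/2\eps)) = O(m\ln(1/2\eps))$. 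Combining with the lower bound of $n/2$ from the previous step yields $m = \Omega(n/\ln(1/2\eps))$, contradicting the hypothesis $m < n/(20\ln(1/2\eps))$ and thus establishing $s \ge 1/(4\eps)$.

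I expect the main obstacle to be this third step: turning the per-threshold count from \Lemma{heart} into a clean bound on total squared mass, and in particular recognising that the $|A_{j,i}| \le 1$ cutoff is essential, since without the upper endpoint the $\int dt/t$ tail diverges logarithmically. Matching the exact constant $20$ (rather than some larger absolute constant) will also require assuming $\eps$ is below a fixed constant, so that the boundary term $2\eps\, N(\sqrt{2\eps}) = O(m)$ is absorbed into the $O(m\ln(1/2\eps))$ integral term; this restriction is harmless since the regime of interest is small $\eps$.
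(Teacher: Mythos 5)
Your proposal is correct and follows essentially the same route as the paper: assume $s < 1/(4\eps)$, split the squared mass at the threshold $2\eps$, bound the small entries by $2\eps ns < n/2$, and bound the large entries via \Lemma{heart} plus a layer-cake integration of $C/x$ over $[2\eps,1]$ — the paper phrases this per row as $|S_i^+|\,\E X = |S_i^+|\int_0^1 \Pr\left(X>x\right)dx \le 2\eps|S_i^+| + \int_{2\eps}^1 \frac{5}{x}\,dx$, which is exactly your computation in disguise. The one bookkeeping difference is the boundary term: rather than bounding $2\eps N(\sqrt{2\eps}) \le 6m$ (which, as you note, forces $\eps$ below a fixed constant to recover the constant $20$), the paper charges the first $2\eps$ of \emph{every} nonzero entry's square — large entries included — to the global $2\eps ns$ term, so the total mass is at most $2\eps ns + 10m\ln(1/2\eps) < n/2 + n/2 = n$, and the stated constant works for all $\eps < 1/2$ with no extra restriction; adopting that accounting would remove the small-$\eps$ caveat from your argument at no cost.
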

\begin{proof}
For the sake of contradiction suppose $s < 1/(4\eps)$.
We know by \Lemma{heart} that for any $x\ge 2\eps$, no row of $A$ can
have more than $5/x$ entries of value at least $\sqrt{x}$ in magnitude
and of the same
sign. Define $S_i = \{j : A_{i,j}^2 \ge 2\eps\}$.
Let $S_i^+$ be the subset of indices $j$ in $S_i$ with $A_{i,j} > 0$,
and define $S_i^- = S_i\backslash S_i^+$.
Let $X$ denote the square
of a random positive value from $S_i^+$. Then
$$\sum_{j\in S_i^+} A_{i,j}^2 = |S_i^+| \cdot \E X  =
|S_i^+| \cdot \int_{0}^1 \Pr\left(X > x\right)dx
\le 2\eps |S_i^+| + \int_{2\eps}^1 \frac 5x dx = 2\eps |S_i^+| +
5\ln(1/2\eps) .$$
By analogously bounding the sum of squares of entries in $S_i^-$, we
have that the sum of
squares of entries at least $\sqrt{2\eps}$ in magnitude is
never more than $2\eps |S_i| + 10\ln(1/2\eps)$ in the $i$th row of
$A$, for any $i$.
Thus
the total sum of squares of all entries in the matrix less than
$\sqrt{2\eps}$ in magnitude is at most $2\eps (n s - \sum_i
|S_i|)$. Meanwhile the sum of all other entries is at most
$2\eps(\sum_i |S_i|) + 10m\ln(1/2\eps)$. Thus the sum of squares of
all entries in the matrix is
at most $2\eps ns + 10m\ln(1/2\eps) < n/2 + 10m\ln(1/2\eps)$, by our
assumption on $s$. This quantity must be $n$, since every column of
$A$ has unit $\ell_2$ norm.
However
for our stated value of $m$ this is impossible since $10m\ln(1/2\eps)
< n/2$, a contradiction.
\end{proof}

We now show how to obtain the extra factor of $\log n / \log(1/\eps)$ in
the lower bound.

\begin{lemma}\LemmaName{lb-core}
Let $0 < \eps < 1/2$.
Suppose $v_1,\ldots,v_n\in\R^m$ each have $\|v\|_2 = 1$ and
$\|v\|_0 \le s$, and furthermore $|\inprod{v_i, v_j}| \le \eps$ for
$i\neq j$. Then for any $t\in [s]$ with $t/s > C\eps$, we must have $s
\ge t(N - 1)/(2C)$ with
$$ N = \ceil{\frac{n}{2^t\binom{m}{t}\binom{2(s+t)}{t}}},\hspace{.5in} C =
2/(1 - 1/\sqrt{2}) .$$
\end{lemma}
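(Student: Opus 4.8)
The plan is to mirror the sign-matrix Theorem of Section~\ref{sec:signjl}: find a large set of $N$ columns that, on a common set of $t$ rows, ``look alike''---same rows, same signs, and (now that magnitudes may vary) the same dyadic magnitude in each row---so that deleting those $t$ coordinates makes the surviving vectors pairwise strongly negatively correlated, and then exploit $0 \le \|\sum_k u_{j_k}\|_2^2$ to bound $N$. The base inequality $s \ge 1/(4\eps)$ of \Lemma{general} guarantees that the hypothesized range of $t$ is nonempty.

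First I would fix, for each column $v_j$, the set $R_j$ of its $t$ coordinates of largest magnitude. Since $v_j$ has at most $s$ nonzeros and $\|v_j\|_2^2=1$, the average squared nonzero entry is at least $1/s$, so these top $t$ squared entries sum to at least $t/s$; this plays the role of the exact mass $t/s$ deleted in the sign case. To compare columns on these coordinates despite arbitrary magnitudes, I would bucket magnitudes dyadically, rounding each selected entry's magnitude down to the nearest power of $\sqrt2$ and recording a nonnegative integer level $\ell$. Two columns get the same \emph{signature} if they select the same $t$ rows, the same sign in each, and the same level in each.

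The crux---the one genuinely new ingredient beyond the sign-matrix argument---is to bound the number of distinct signatures by $2^t\binom{m}{t}\binom{2(s+t)}{t}$. The factors $2^t$ (signs) and $\binom{m}{t}$ (choice of rows) are immediate. For the magnitude profile I would cap levels: coordinates so small that they would exceed the cap carry total squared mass $\ll t/s$ and may be discarded at negligible loss, after which the selected levels can be arranged to satisfy $\sum_i \ell_i \le 2s+t$. The number of ways to assign nonnegative integer levels to the $t$ chosen rows subject to this budget is the number of compositions, $\binom{(2s+t)+t}{t} = \binom{2(s+t)}{t}$. Pinning down this level budget---so that the per-row level bookkeeping (which we need, since a per-row product lower bound is not permutation-invariant) costs only $\binom{2(s+t)}{t}$ rather than the naive $(2s)^t$, while still retaining mass $\gtrsim t/s$---is where the real work lies, and I expect it to be the main obstacle.

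With a signature class of size $N = \lceil n/(2^t\binom{m}{t}\binom{2(s+t)}{t})\rceil$ in hand, the finish copies the sign case. For two columns $v_{j_{k_1}}, v_{j_{k_2}}$ in the class, on each of the $t$ common rows the entries share a sign and a dyadic magnitude, so their product is at least the squared rounded magnitude; summing over the $t$ rows, the deleted contribution is at least some $\gamma$ equal to a constant fraction of $t/s$ (the constant absorbing the factor-$2$ loss from rounding squared magnitudes together with the negligible capping loss---exactly where $C=2/(1-1/\sqrt2)$ and the hypothesis $t/s>C\eps$ enter, ensuring $\gamma-\eps \gtrsim t/s$). Writing $u_{j_k}$ for $v_{j_k}$ with those $t$ coordinates zeroed, we get $\langle u_{j_{k_1}}, u_{j_{k_2}}\rangle \le \eps-\gamma$ and $\|u_{j_k}\|_2^2 \le 1$, so expanding $0 \le \|\sum_{k=1}^N u_{j_k}\|_2^2 \le N + N(N-1)(\eps-\gamma)$ and rearranging gives $(N-1)(\gamma-\eps) \le 1$; since $\gamma-\eps$ is comparable to $t/s$, this yields $s \ge t(N-1)/(2C)$, as claimed.
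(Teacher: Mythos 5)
Your overall architecture matches the paper's --- pigeonhole the columns into ``types'' recording the locations, signs, and rounded magnitudes of the top $t$ coordinates, then expand $0 \le \bigl\|\sum_k u_{j_k}\bigr\|_2^2$ --- and your finishing step is sound: same-sign, same-bucket entries have pairwise products recovering a constant fraction of the deleted mass, which is at least $t/s$, and the constants work out against $C = 2/(1-1/\sqrt{2})$ and the hypothesis $t/s > C\eps$. The genuine gap is exactly the step you flagged as the main obstacle, and it does not survive: with dyadic (multiplicative) rounding, the unit-norm constraint controls $\sum_i 2^{-\ell_i}$, \emph{not} $\sum_i \ell_i$, so the budget $\sum_i \ell_i \le 2s+t$ that your composition count $\binom{2(s+t)}{t}$ requires simply does not hold. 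Capping does not rescue it: to keep the discarded mass $\ll t/s$ you must retain every coordinate whose squared value is at least a constant times $1/s$, forcing the cap to sit at level $L \approx \log_2 s$, after which all you can conclude is $\sum_i \ell_i \le tL \approx t\log_2 s$. This exceeds $2s+t$ as soon as $t \gg s/\log s$, a regime the lemma must cover: the statement allows any $t \le s$ with $t/s > C\eps$, and in the paper's application (\Theorem{main}) one takes $t = 7\eps s = \Theta(s)$ with $\eps$ a constant, so this is not a corner case. In that regime the number of capped dyadic profiles is not bounded by $\binom{2(s+t)}{t}$ by your argument (absent the mass constraint it is of order $(\log s)^{t}$, and exploiting the mass constraint multiplicatively gives no clean additive budget), so the pigeonhole count, and with it the stated value of $N$, is unproven.

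The repair is to round \emph{additively} rather than multiplicatively, which is what the paper does: round each of the top $t$ squared coordinates to the nearest integer multiple of $1/(2s)$. Then the rounded squares are nonnegative multiples of $1/(2s)$ summing to at most $1 + t/(2s)$, so the integer multipliers are nonnegative integers summing to at most $2s+t$, and the number of magnitude profiles is at most $\binom{2(s+t)}{t}$ automatically --- no cap, no discarded coordinates, and no restriction on $t$ whatsoever. The per-coordinate comparison still works with this rounding: two same-sign entries whose squares round to the same multiple of $1/(2s)$ differ by at most $1/\sqrt{2s}$, so for two columns of the same type, $\sum_{r\in S}(v_{i_j})_r(v_{i_k})_r \ge \|(v_{i_j})_S\|_2^2 - \sqrt{t/(2s)}\,\|(v_{i_j})_S\|_2 \ge \left(1-1/\sqrt{2}\right)t/s$ by Cauchy--Schwarz together with $\|(v_{i_j})_S\|_2^2 \ge t/s$; this is precisely where $C = 2/(1-1/\sqrt{2})$ comes from, and the rest of your finish then goes through essentially verbatim.
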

\begin{proof}
We label each vector $v_i$ by its {\em $t$-type}, defined in the
following way. The $t$-type of a vector $v_i$ is the set of locations
of the $t$ largest coordinates in magnitude, as well as the signs of
those coordinates, together with a rounding of those top $t$
coordinates so that their squares round to the nearest integer
multiple of $1/(2s)$. In the rounding, values halfway between two multiples are
rounded arbitrarily; say downward, to be concrete. Note that the
amount of $\ell_2$ mass contained in the top $t$ coordinates of any
$v_i$ after such a rounding is at most $1+t/(2s)$, and thus the number of
roundings possible is at most the number of ways to write a positive
integer in $[2s+t]$ as a sum
of $t$ positive integers, which is $\binom{2s+2t}{t}$. Thus
the total number of possible $t$-types is at most
$2^t\binom{m}{t}\binom{2(s+t)}{t}$ ($\binom{m}{t}$ choices of the
largest $t$ coordinates, $2^t$ choices of their signs, and
$\binom{2(s+t)}{t}$ choices for how they round). Thus by the pigeonhole
principle, there exist $N$ vectors $v_{i_1},\ldots,v_{i_N}$ each with
the same $t$-type such that $N \ge
\ceil{n/(2^t\binom{m}{t}\binom{2(s+t)}{t})}$.

Now for these vectors $v_{i_1},\ldots,v_{i_N}$, let $S\subset[n]$ of
size $t$ be the set of the largest coordinates (in magnitude) in each
$v_{i_j}$. Define
$u_{i_j} = (v_{i_j})_{[n]\backslash S}$; that is, we zero out the
coordinates in $S$. Then for $j\neq k\in[N]$,
\allowdisplaybreaks
\begin{align}
\nonumber \inprod{u_{i_j}, u_{i_k}} &= \inprod{v_{i_j}, v_{i_k}} -
\sum_{r\in S}(v_{i_j})_r(v_{i_k})_r\\
\nonumber {}&\le \eps - \sum_{r\in
  S}(v_{i_j})_r((v_{i_j})_r \pm 1/\sqrt{2s})\\
\nonumber {}&\le \eps - \sum_{r\in
  S}\left((v_{i_j})_r^2 - |(v_{i_j})_r|/\sqrt{2s}\right)\\
\nonumber {}&\le \eps - \|(v_{i_j})_S\|_2^2 + \sqrt{t/(2s)}\cdot
\left\|(v_{i_j})_S\right\|_2\\
{}&\le \eps - \left(1 - \frac{1}{\sqrt{2}}\right)t/s
.\EquationName{dotprod}
\end{align}
The last inequality used that $\|(v_{i_j})_S\|_2\ge \sqrt{t/s}$. Also
we pick $t$ to ensure
$t/s > 2\eps/(1 - 1/\sqrt{2})$ so that
the right hand side of \Equation{dotprod} is less than $-((1 -
1/\sqrt{2})/2)t/s = -Ct/s$. The penultimate inequality follows by
Cauchy-Schwarz.
Thus we have
\begin{align}
\nonumber \left\|\sum_{j=1}^N u_{i_j}\right\|_2^2 &= \sum_{j=1}^N
\|u_{i_j}\|_2^2 + \sum_{j\neq k}\inprod{u_{i_j}, u_{i_k}}\\
{}&\le N - C(t/s) N(N-1)/2 \EquationName{small-ell2}
\end{align}

However we also have $\|\sum_j u_{i_j}\|_2^2 \ge 0$, which 
implies $s \ge C(N-1)t/2$ by rearranging \Equation{small-ell2}.
\end{proof}

\begin{theorem}\TheoremName{main}
There is some fixed $0<\eps_0\le 1/2$ so that the following holds. Let
$1/\sqrt{n} < \eps < \eps_0$.
Suppose $v_1,\ldots,v_n\in\R^m$ each have $\|v\|_2 = 1$ and
$\|v\|_0 \le s$, and furthermore $|\inprod{v_i, v_j}| \le \eps$ for
$i\neq j$. Then $s \ge \Omega(\eps^{-1}\log n/\log(m/\log n))$
 as long as $m < O(n/\ln(1/\eps))$.
\end{theorem}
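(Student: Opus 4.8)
The plan is to follow the template of the sign-matrix theorem in \Section{signjl}, but to feed the refined counting of \Lemma{lb-core} (which handles arbitrary entry magnitudes via the rounded $t$-type) into \Lemma{deriv} in place of the cruder sign-case estimate. First I would invoke \Lemma{general}: the hypothesis $m < O(n/\ln(1/\eps))$ can be taken to imply $m < n/(20\ln(1/(2\eps)))$, so we obtain the base bound $s \ge 1/(4\eps)$, and in particular $\eps s = \Omega(1)$. This lets me pick an integer $t$ with $t/s > C\eps$ (where $C = 2/(1 - 1/\sqrt2)$) and $t = \Theta(\eps s)$; concretely $t = \lfloor C\eps s\rfloor + 1$ works, and $1 \le t \le s$ once $\eps < \eps_0$ is small enough that $C\eps < 1/2$. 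Applying \Lemma{lb-core} with this $t$ yields $s \ge t(N-1)/(2C)$ with $N = \lceil n/(2^t\binom{m}{t}\binom{2(s+t)}{t})\rceil$.

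Next I would bound the number of $t$-types. Using $\binom{m}{t} \le (em/t)^t$ and $\binom{2(s+t)}{t} \le (4es/t)^t$ (valid since $t \le s$ gives $2(s+t) \le 4s$), the denominator is at most $(8e^2 ms/t^2)^t$, so $N \ge n(t^2/(8e^2 ms))^t$. Assume toward a contradiction that $s$ is below the claimed bound by a small constant factor; this forces $n(t^2/(8e^2 ms))^t \ge 2$, hence $N \ge 2$ and $N - 1 \ge N/2$, giving $s \ge (t/4C)\,n\,(t^2/(8e^2 ms))^t$. Substituting $t = \Theta(\eps s)$ (so that $t^2/(8e^2 ms) = \Theta(\eps^2 s/m)$) and taking logarithms produces an inequality of exactly the shape required by \Lemma{deriv}:
$$ s\ln(q'/s) \ge r, \qquad q' = \Theta(m/\eps^2),\quad r = \Theta\!\left(\eps^{-1}\ln(\eps n)\right). $$

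Finally I would verify the hypotheses of \Lemma{deriv} and simplify. We have $s \le m \le q'/e$, and using Alon's row lower bound $m = \Omega(\eps^{-2}\log n/\log(1/\eps))$ \cite{Alon09} together with $\eps > 1/\sqrt n$ (so $\ln(\eps n) = \Theta(\log n)$), we get $q'/r = \Theta(m/(\eps\log n)) \ge 2$ for $\eps < \eps_0$. \Lemma{deriv} then yields $s = \Omega(r/\ln(q'/r))$. The same Alon bound shows $\ln(1/\eps) = O(\ln(m/\log n))$, so the spurious $\eps$-factors inside the logarithm are absorbed, $\ln(q'/r) = \Theta(\ln(m/\log n))$, and we conclude $s = \Omega(\eps^{-1}\log n/\log(m/\log n))$, contradicting the assumption for a sufficiently small constant.

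I expect the main obstacle to be the bookkeeping around the extra $\binom{2(s+t)}{t}$ factor — the price paid for allowing arbitrary entry magnitudes through rounded $t$-types — and in showing that it only perturbs constants inside the logarithm rather than the order of the final bound. Relatedly, one must handle the mildly self-referential threshold (that $s$ lying below the target is what guarantees $N \ge 2$) and confirm that the $\ln(q'/r)$ emerging from \Lemma{deriv} collapses to the advertised $\log(m/\log n)$, which is precisely the step where Alon's lower bound on $m$ is indispensable.
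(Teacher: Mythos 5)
Your proposal is correct and takes essentially the same approach as the paper's own proof: \Lemma{general} for the base bound $s=\Omega(1/\eps)$, then \Lemma{lb-core} with $t=\Theta(\eps s)$, the binomial estimate $2^t\binom{m}{t}\binom{2(s+t)}{t}\le (8e^2 ms/t^2)^t$, and finally \Lemma{deriv}, with Alon's bound $m=\Omega(\eps^{-2}\log n/\log(1/\eps))$ invoked exactly where you invoke it (to verify $q/r\ge 2$ and to absorb the $\log(1/\eps)$ factors so that $\log(\eps^{-1}m/\log n)=\Theta(\log(m/\log n))$). The only deviations are cosmetic, e.g.\ your explicitly integral choice $t=\lfloor C\eps s\rfloor+1$ versus the paper's $t=7\eps s$, and phrasing the self-referential threshold (that $s$ below the target forces $N\ge 2$) as a contradiction rather than a case assumption.
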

\begin{proof}
By \Lemma{general}, $4\eps s \ge 1$. Set $t = 7\eps s$ so that
\Lemma{lb-core} applies. Then by \Lemma{lb-core}, as long as
$2^t\binom{m}{t}\binom{2(s+t)}{t} \le n/2$,
\begin{align*}
7\eps n = \frac{tn}{s} &\le 4C \cdot 2^t\binom{m}{t}\binom{2(s+t)}{t}\\
&\le 4C\cdot \left(\frac{8e^2m}{49\eps^2 s}\right)^{7\eps s} ,
\end{align*}
where $C$ is as in \Lemma{lb-core}.
Taking the natural logarithm on both sides,
$$\ln (7\eps n/(4C)) \le (7\eps s)\ln\left(\frac{8e^2 m}{49\eps^2
    s}\right)$$
In other words,
$$s \ge \frac{\ln (7\eps n/(4C))}{7\eps \ln\left(\frac{8e^2 m}{49\eps^2
      s}\right)} .$$
Define $r = \ln(7\eps n/(4C))/(7\eps), q = 8e^2m/(49\eps^2)$. Thus we have
$s\ln(q/s) \ge r$. We have that $s \le q/e$ is
always the case for $\eps < 1/2$ since then 
$q/e \ge m$ and we have that $s \le m$. Also note for $\eps$ smaller
than some constant we have that $q/r > 2$ since $m = \Omega(\log n)$
by \cite{Alon09}. Thus by \Lemma{deriv} we have
$s \ge \Omega(r/\ln(q/r))$. Using that $\ln(\eps n) = \Theta(\log n)$
since
$\eps > 1/\sqrt{n}$, and that $2^t\binom{m}{t}\binom{2(s+t)}{t} \le
(8e^2m/(49\eps^2 s)) \le n/2$ for our setting of $t$ when $s =
o(\eps^{-1}\log n / \log(m/(\eps^{-1}\log n)))$ 
gives $s = \Omega(\eps^{-1}\log n/\log(\eps^{-1}m/\log n))$. Since $m
= \Omega(\eps^{-2}\log n/\log(1/\eps))$ \cite{Alon09}, this is
equivalent to our lower bound in the theorem statement.
\end{proof}

\begin{corollary}
Let $\eps,m,s$ be as in \Theorem{main}. Then $s = \Omega(\eps^{-1}\log
n / \log(1/\eps))$ as long as $m \le \mathrm{poly}(1/\eps)\cdot \log n
 < O(n/\ln(1/\eps))$.
\end{corollary}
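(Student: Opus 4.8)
The plan is to derive this corollary as an immediate specialization of \Theorem{main}, with no new ideas required. The theorem already establishes, for $1/\sqrt{n} < \eps < \eps_0$ and $m < O(n/\ln(1/\eps))$, the bound $s = \Omega(\eps^{-1}\log n/\log(m/\log n))$. Since the corollary's hypothesis $m \le \poly(1/\eps)\cdot\log n < O(n/\ln(1/\eps))$ already contains the row restriction $m < O(n/\ln(1/\eps))$ required by the theorem, and since the admissible range $1/\sqrt{n} < \eps < \eps_0$ for $\eps$ is inherited verbatim through the phrase ``let $\eps,m,s$ be as in \Theorem{main},'' all the hypotheses of the theorem are met. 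The only work, then, is to show that under the extra assumption $m \le \poly(1/\eps)\cdot\log n$ the denominator $\log(m/\log n)$ simplifies to $O(\log(1/\eps))$.

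First I would invoke \Theorem{main} to obtain $s = \Omega(\eps^{-1}\log n/\log(m/\log n))$. Then I would use the assumption $m \le \poly(1/\eps)\cdot\log n$, which gives $m/\log n \le \poly(1/\eps) = (1/\eps)^{O(1)}$, and hence $\log(m/\log n) \le O(1)\cdot\log(1/\eps) = O(\log(1/\eps))$. Since this quantity sits in the denominator, replacing it by the larger bound $O(\log(1/\eps))$ only decreases the expression, so substituting yields
$$s = \Omega\left(\frac{\eps^{-1}\log n}{\log(m/\log n)}\right) = \Omega\left(\frac{\eps^{-1}\log n}{\log(1/\eps)}\right),$$
which is exactly the claimed bound.

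There is no genuine obstacle here: the corollary is a direct weakening of \Theorem{main} to the regime where the number of rows $m$ exceeds $\log n$ by at most a polynomial-in-$(1/\eps)$ factor, a regime that includes the JL-optimal choice $m = O(\eps^{-2}\log n)$. The only point requiring (minor) care is confirming that passing from $\log(m/\log n)$ to $\log(1/\eps)$ is a valid lower-bound manipulation, i.e.\ that enlarging the denominator preserves the $\Omega(\cdot)$ statement, which it does. I note that this argument exactly mirrors the corollary following the sign-matrix theorem in \Section{signjl}, where the same substitution turns $\log(m/\log n)$ into $\log(1/\eps)$ under the identical hypothesis on $m$.
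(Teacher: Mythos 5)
Your proposal is correct and matches the paper's (implicit) reasoning exactly: the paper states this corollary without proof precisely because it is the immediate specialization of \Theorem{main} you describe, where $m \le \mathrm{poly}(1/\eps)\cdot\log n$ gives $\log(m/\log n) = O(\log(1/\eps))$ in the denominator. Your observation that enlarging the denominator preserves the $\Omega(\cdot)$ bound is the only step needed, and it is handled correctly.
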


\begin{remark}
\textup{
From \Theorem{main}, we can deduce that for constant $\eps$, in order
for the sparsity $s$ to be a constant independent of $n$, it must be
the case that $m = n^{\Omega(1)}$. This fact rules out very sparse
mappings even when we significantly increase the target dimension.
}
\end{remark}

\section{RIP Sparsity Lower Bound}\SectionName{riplb}
Consider a $k$-RIP matrix $A \in \R^{m\times n}$ with distortion
$\delta_k$ where each column has
at most $s$ non-zero entries. We will show for $\delta_k = \Theta(1)$
that $s$ cannot be very
small when $m$ has the optimal number of rows $\Theta(k\log(n/k))$.

\begin{theorem}\TheoremName{rip-sparsity}
Assume $k \ge 2$, $\delta_k < \delta$ for some fixed universal small
constant $\delta>0$, $m < n/(64\log^3 n)$. Then we must have
$s = \Omega(\min\{k\log (n/k)/\log(m/(k\log(n/k))),
m\})$.
\end{theorem}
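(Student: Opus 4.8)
The plan is to reduce the restricted isometry property to a statement about near-orthonormal sparse columns, exactly the setting of \Section{jlproof}, and then to strengthen the type-counting argument of \Lemma{lb-core} using the full $k$-wise power of RIP. First I would record the two elementary consequences of $k$-RIP that drive everything. Applying the definition to the $1$-sparse vectors $e_i$ shows $\|v_i\|_2^2 = \|Ae_i\|_2^2 \in [1-\delta_k, 1+\delta_k]$, so every column $v_i = Ae_i$ has norm within a constant factor of $1$; in particular the top $t$ of its at most $s$ nonzero coordinates carry squared mass at least $t/(2s)$. Applying the definition to $x = \sum_{i\in T}e_i$ for any $T$ with $|T| \le k$ gives the key upper bound $\|\sum_{i\in T} v_i\|_2^2 \le (1+\delta_k)|T| \le 2|T|$. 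This second inequality is the new ingredient absent from the JL argument, and it is what will let the bound scale with $k$.

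Next I would prove an RIP analog of \Lemma{lb-core}. As there, I label each column by its $t$-type --- the locations, signs, and $1/(2s)$-rounded squared magnitudes of its top $t$ coordinates --- so that the number of types is at most $2^t\binom{m}{t}\binom{2(s+t)}{t}$, and by pigeonhole some type is shared by $N = \lceil n/(2^t\binom{m}{t}\binom{2(s+t)}{t})\rceil$ columns $v_{i_1},\dots,v_{i_N}$. Let $S$ be their common set of $t$ heavy coordinates and set $N' = \min(N,k)$. Because these $N'$ columns agree on $S$ in sign and (up to the rounding) in magnitude, they add coherently there: $\|(\sum_{j=1}^{N'} v_{i_j})_S\|_2 \ge (1-o(1))N'\sqrt{t/(2s)}$, whence $\|\sum_{j=1}^{N'} v_{i_j}\|_2^2 \ge \Omega((N')^2 t/s)$. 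Since $N' \le k$, the RIP upper bound forces $\Omega((N')^2 t/s) \le 2N'$, i.e.\ $s = \Omega(tN') = \Omega(t\min(N,k))$. The quadratic-in-$N'$ lower bound, and hence the factor-$k$ improvement, is exactly what the magnitude rounding buys us --- this coherent-addition step is the heart of the argument and the place I expect the most care to be needed, since without comparable magnitudes one only obtains a gain linear in $N'$ and loses the $k$ factor entirely.

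Finally I would optimize $t$ and solve the resulting self-referential inequality with \Lemma{deriv}, in direct parallel with \Theorem{main}. Choosing $t = \Theta(s/k)$ makes the claimed bound $s = \Omega(tk)$ self-contradictory unless the collision fails, so the number of $t$-types must exceed $n/k$; expanding $2^t\binom{m}{t}\binom{2(s+t)}{t} > n/k$ via $\binom{a}{t}\le(ea/t)^t$ and taking logarithms yields an inequality of the form $s\ln(q/s) = \Omega(k\log(n/k))$ with $q = \Theta(mk^2)$. Invoking \Lemma{deriv} with $r = \Theta(k\log(n/k))$ then gives $s = \Omega(r/\ln(q/r)) = \Omega(k\log(n/k)/\log(m/(k\log(n/k))))$, after absorbing lower-order $\log k$ terms using the hypotheses $k\ge 2$ and $m < n/(64\log^3 n)$ (which also guarantee the pigeonhole collisions and that $q/r \ge 2$). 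The truncation $\min\{\cdot, m\}$ appears for free, since $s \le m$ always and in the dense regime $m = \Theta(k\log(n/k))$ the denominator $\log(m/(k\log(n/k)))$ is $O(1)$, forcing $s = \Omega(m)$. A small base case --- e.g.\ a row-pigeonhole plus the RIP sum bound giving $s = \Omega(k)$, analogous to \Lemma{general} --- ensures the chosen $t$ satisfies $1 \le t \le s$, and handling very small $k$ may require treating $t=1$ separately.
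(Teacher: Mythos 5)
Your high-level skeleton---pigeonhole the columns into structural classes, add colliding columns coherently, contradict the RIP upper bound $\|A\sum_{i\in T}e_i\|_2^2\le(1+\delta_k)|T|$, and finish with \Lemma{deriv}---is exactly the paper's. The gap is in the counting, and it is fatal to the theorem as stated. Your $t$-type specifies the \emph{exact} set of top-$t$ locations, their signs, and their rounded magnitudes, so with $t=\Theta(s/k)$ each of the $t$ positions costs $\log(m/s)+2\log k+O(1)$ bits rather than $\log(m/s)+O(1)$: indeed $\binom{m}{t}\le(em/t)^t=(\Theta(1)\cdot mk/s)^t$ and $\binom{2(s+t)}{t}\le(\Theta(1)\cdot k)^t$. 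Consequently your final inequality is $s\ln(q/s)\ge\Omega(k\log(n/k))$ with $q=\Theta(mk^2)$---as you yourself note---and \Lemma{deriv} then gives only
$$s=\Omega\left(\frac{k\log(n/k)}{\log\big(m/(k\log(n/k))\big)+\log k}\right).$$
The $\log k$ term is not lower-order and the hypotheses do not let you absorb it: in the central regime $m=\Theta(k\log(n/k))$ (the optimal number of rows, which is what the theorem and its corollary are about) one has $\log(m/(k\log(n/k)))=O(1)$ while $\log k$ can be $\Theta(\log n)$; take, e.g., $k=n^{0.4}$ and $m=\Theta(k\log n)<n/(64\log^3 n)$. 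There your argument yields only $s=\Omega(m/\log n)$, whereas the theorem asserts $s=\Omega(m)$, i.e.\ that such matrices are dense. Your closing sentence that the dense regime ``forces $s=\Omega(m)$'' is quoting the bound you want, not the bound your counting produces.

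The paper's proof avoids both $\log k$ factors, and the way it does so also refutes the heuristic your plan leans on. Coherent addition does \emph{not} require comparable magnitudes: if $N'$ columns all have, at a given position, entries of the same sign and magnitude at least $\tau$, then the sum there is at least $N'\tau$, so its square is at least $(N')^2\tau^2$---already quadratic in $N'$, with no rounding. Thus the $\binom{2(s+t)}{t}$ factor (one of your $\log k$'s) is unnecessary, \emph{provided} one has a per-coordinate magnitude threshold. The top-$t$ coordinates of a column supply no such threshold (all the mass may sit on a single coordinate), which is exactly why the paper replaces ``top $t$'' by a dyadic decomposition: since each column has squared norm $\ge 1-\delta_k$, some scale $t\le\log s$ has at least $2^{-t-1}s/t^2$ entries of magnitude $\ge 2^{(t-3)/2}/\sqrt{s}$. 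Its pattern is then an \emph{arbitrary} $u$-subset of this large qualifying set (plus signs, no magnitudes), $u=\max\{2^{4-t}s/k,1\}$, so the fraction of columns sharing a pattern is governed by $\binom{2^{-t-1}s/t^2}{u}/(2^u\binom{m}{u})\ge\left(\frac{s}{t^2 2^{t+2}em}\right)^u$: per-position cost $\log(m/s)+O(t+\log t)$, with no $\log k$, because the pattern does not pin down \emph{which} of the column's large coordinates it uses. This is what lets the paper run \Lemma{deriv} with $q=m$ instead of $q=\Theta(mk^2)$. Finally, the paper's $u=1$ branch (the only place $m<n/(64\log^3 n)$ is used) handles the regime $s\ll k$, where your choice $t=\Theta(s/k)\ge 1$ is unavailable; your proposed $t=1$ fallback inside your framework gives only $s=\Omega(\min\{k,\sqrt{n/m}\})$ (the rounding again costs a factor $s$ in the type count), which is short of the theorem whenever $k\gg\sqrt{n/m}$.
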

\begin{proof}
  Assume for the sake of contradiction that $s <
  \min\{k\log(n/k)/(64\log(m/s)), m/64\}$. Consider the $i$th column of
  $A$ for some fixed $i$. By $k$-RIP, the
  $\ell_2$ norm of each column of $A$ is at least $1-\delta_k > 1/2$,
  so the sum of squares of entries greater than $1/(2\sqrt{s})$ in magnitude is at
  least $1/4$.
Therefore, there exists a scale $1\le t\le \log s$ such
  that the number of entries of absolute value greater than or equal
  to $2^{(t-3)/2}/\sqrt{s}$ is at least $2^{-t-1}s/t^2$.  To see this,
  let $|S|$ be the set of rows $j$ such that $|A_{j,i}| \ge
  1/(2\sqrt{s})$. For the sake of contradiction, suppose that every
  scale $1\le t\le \log s$ has strictly fewer than $2^{-t-1}s/t^2$
  values that are
  at least $2^{(t-3)/2}/\sqrt{s}$ in magnitude (note this also implies
  $|S| < s/4$). Let $X$ be the square
  of a random element of $S$. Then
$$\sum_{j\in S}A_{j,i}^2 = |S|\cdot \E X = |S| \cdot \int_{0}^{\infty}
\Pr\left(X > x\right)dx < \frac{1}{16} +
\int_{1/4s}^{\infty}\Pr\left(X > x\right) dx  < \frac 1{16} +
\sum_{t=1}^\infty \frac{2^t}{8s}\cdot \frac{s}{2^{t+1}t^2} < \frac 14
,$$
a contradiction.
Let a pattern
  at scale $t$ be a subset of size $u=\max\{2^{4-t}s/k, 1\}$ of $[m]$
  along with $u$ signs. There are $2^{-t-1} s/t^2 \choose u$ patterns
  $P$ where $A_{v,i}^2 \ge 2^{t-3}/s$ for all $v\in P$ and the signs
  of $A_{v,i}$ match the signs of $P$.

There are $2^u {m\choose u}$ possible patterns at scale $t$. By an
averaging argument, there exists a scale $t$, and a pattern $P$ such
that the number of columns of $A$ with this pattern is at least $z =
n{2^{-t-1}s/t^2 \choose u }/((\log s)2^u {m\choose u})$. Consider
2 cases.

\paragraph{Case 1 ($z\ge k$):}Pick an arbitrary set of $k$ such
columns. Consider the vector $v$ with $k$ ones at locations
corresponding to those columns and zeroes everywhere else. We have
$\|v\|_2^2 = k$ and for each $j\in P$, we have
$$(Av)_j^2 \ge k^2 2^{t-3}/s .$$
Thus, 
$$\|Av\|_2^2 \ge u k^2 2^{t-3}/s \ge  2k .$$
This contradicts the assumption that $\|Av\|_2^2 \le
(1+\delta_k)\|v\|_2^2$.

\paragraph{Case 2 ($z < k$):} Consider the vector $v$ with $z$ ones at
locations corresponding to those columns and zeroes everywhere else. We
have $\|v\|_2^2 = z$ and for each $j\in P$, we have $(Av)_j^2 \ge z^2
2^{t-3}/s$. Consider 2 subcases.

\paragraph{Case 2.1 ($u=1$):} Then $z = \frac{n2^{-t-2}s/t^2}{(\log
  s)m}$, so
\begin{equation}
\|Av\|_2^2 \ge z^2 2^{t-3}/s \ge \frac{2^{-5}n/t^2}{(\log s)m} \cdot
z \ge 2z .\EquationName{small-m}
\end{equation}

This contradicts the assumption that $\|Av\|_2^2 \le
(1+\delta_k)\|v\|_2^2$.

\paragraph{Case 2.2 ($u = 2^{4-t}s/k$):} . We have

\allowdisplaybreaks
\begin{align}
\nonumber z =&\frac{n{2^{-t-1}s/t^2 \choose u }}{(\log s)2^u {m\choose u}}\\
\nonumber \ge& \frac{n}{\log s}\left(\frac{s}{t^22^{t+2}em}\right)^u\\
\nonumber \ge& \frac{n}{\log s}2^{-(\log(m/s) + \log e + t+2 + 2 \log t)2^{4-t}s/k}\\
\ge& \frac{n}{\log s}2^{-(\log(m/s) + \log e + t+2+ 2 \log t)2^{4-t}\cdot \log(n/k)/(64\log(m/s))}\EquationName{subs}\\
\ge& \frac{n}{\log s}(k/n)^{1/4} \EquationName{mint}\\
\ge& k .\EquationName{small-k}
\end{align}

\Equation{subs} follows from $s < k\log(n/k)/(64\log(m/s))$. \Equation{mint} follows from the fact that $f(t) = (\log(m/s) + \log e + t + 2 + 2\log t)2^{-t}$ is monotonically decreasing for $t\ge 1$. Indeed, 
\begin{align*}
f'(t) &= 2^{-t}\left(-\ln2(\log(m/s) + \log e + 2+t+2\log t\right) +
\frac{2}{t\ln 2} + 1)\\
{}&\le 2^{-t}\left(-9\ln 2 - t\ln 2 +\frac{2}{t\ln 2} + 1\right)\\
{}& \le  0 .
\end{align*}
\Equation{small-k} follows since $k < n/\log^{4/3} n < n/\log^{4/3}
s$, which holds since $k \le m\le n/(64\log^3 n)$.
This contradicts the assumption of Case 2 that $z < k$.

Thus we have $s \ge
  \min\{k\log(n/k)/(64\log(m/s)), m/64\}$ as desired. If $s \ge m/64$
  we are done. Otherwise we have $s \ge
  k\log(n/k)/(64\log(m/s))$. Define $q = m$, $r = k\log(n/k)/64$. Thus
  we have $s\log(q/s) \ge r$. We have $q/r \ge 2$ for $\delta_k$
  smaller than some constant by \Theorem{rip-rows}, and we have $s <
  q/e = m/e$ since we assume we are in the case $s < m/64$. Thus by
  \Lemma{deriv} we have $s = \Omega(r/\ln(q/r))$, which completes the
  proof of the theorem.
\end{proof}

\begin{corollary}
  When $k \ge 2$, $\delta_k < \delta$ for some universal constant
  $\delta>0$, and the number of rows $m =
  \Theta(k\log(n/k)) < n/(32\log^3 n)$, we must have $s = \Omega(k\log
  (n/k))$.
\end{corollary}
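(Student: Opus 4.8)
The plan is to obtain this corollary as a direct specialization of \Theorem{rip-sparsity} to the optimal row count $m=\Theta(k\log(n/k))$; no new ideas are needed beyond checking that the theorem's hypotheses still hold and that the $\min$ in its conclusion collapses to $k\log(n/k)$. The conditions $k\ge 2$ and $\delta_k<\delta$ are assumed outright, so the only hypothesis of \Theorem{rip-sparsity} to examine is $m<n/(64\log^3 n)$, which is nominally stronger than the corollary's assumption $m<n/(32\log^3 n)$.

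To reconcile this, I would inspect where the theorem's proof actually uses $m<n/(64\log^3 n)$: it is invoked only to certify $k<n/\log^{4/3}n$ in \Equation{small-k}. Since $A$ is $k$-RIP we have $k\le m$ inherently, so $k\le m<n/(32\log^3 n)$, and one checks $n/(32\log^3 n)<n/\log^{4/3}n$ (equivalently $1<32\log^{5/3}n$, true for $n\ge 2$). Hence $k<n/\log^{4/3}n$ still holds, and the entire argument of \Theorem{rip-sparsity} goes through under the weaker bound $m<n/(32\log^3 n)$; the constant $64$ in the theorem statement is simply not tight. (This same chain, with $m=\Theta(k\log(n/k))$, also keeps $k$ safely away from $n$, namely $k<n/\polylog n$, consistent with the remark that no sparsity lower bound can survive as $k\to n$.)

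It then remains to evaluate the bound $s=\Omega(\min\{k\log(n/k)/\log(m/(k\log(n/k))),\,m\})$ at $m=\Theta(k\log(n/k))$. The second term is exactly $m=\Theta(k\log(n/k))$. For the first term, the ratio $m/(k\log(n/k))$ is a fixed constant bounded away from $0$ and $\infty$, so $\log(m/(k\log(n/k)))=\Theta(1)$ and the first term is also $\Theta(k\log(n/k))$; thus the $\min$ is $\Theta(k\log(n/k))$ and $s=\Omega(k\log(n/k))$.

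The one point requiring care — and what I would treat as the (minor) obstacle — is ensuring the logarithmic factor $\log(m/(k\log(n/k)))$ stays bounded away from $0$, so the first term is not vacuous. I would sidestep this by working from the inequality $s\log(m/s)\ge k\log(n/k)/64$ established inside the proof of \Theorem{rip-sparsity} in the case $s<m/64$ (in the complementary case $s\ge m/64=\Omega(k\log(n/k))$ we are already done). Writing $\xi=s/m$ and using $m=\Theta(k\log(n/k))$, this reads $\xi\log(1/\xi)=\Omega(1)$, and since $x\log(1/x)\to 0$ as $x\to 0$ this forces $\xi=\Omega(1)$, i.e. $s=\Omega(m)=\Omega(k\log(n/k))$. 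This is precisely \Lemma{deriv} in the regime $q/r=\Theta(1)$, and it is exactly where the assumption $m=\Theta(k\log(n/k))$ (rather than a larger $m$) is essential: it is what removes the $\log(m/(k\log(n/k)))$ factor present in the general theorem.
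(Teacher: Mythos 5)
Your overall route is the paper's own (implicit) one: the corollary is meant to follow by specializing \Theorem{rip-sparsity} to $m=\Theta(k\log(n/k))$, and your way of collapsing the $\min$ — working from the inequality $s\log(m/s)\ge k\log(n/k)/64$ inside the theorem's proof (i.e., \Lemma{deriv} with $q/r=\Theta(1)$, or your $\xi\log(1/\xi)=\Omega(1)$ argument) rather than plugging into the displayed bound, whose denominator $\log(m/(k\log(n/k)))$ could be near zero or negative — is correct and is exactly how the paper's final step is structured.

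The genuine gap is your reconciliation of the constants $32$ versus $64$. You claim the hypothesis $m<n/(64\log^3 n)$ is invoked only to certify $k<n/\log^{4/3}n$ in \Equation{small-k}. That is false: it is also used, with the constant $64$ doing real work, in Case 2.1 of the proof, i.e.\ at \Equation{small-m} — and the paper says so explicitly in the remark immediately following this corollary. In that case $u=1$ and $z=\frac{n2^{-t-2}s/t^2}{(\log s)m}$, so
$$\|Av\|_2^2 \;\ge\; z^2\,2^{t-3}/s \;=\; z\cdot \frac{n}{32\,t^2(\log s)\,m},$$
and since the scale $t$ can be as large as $\log s$, and $\log s$ can be essentially as large as $\log n$, the factor $\frac{n}{32\,t^2(\log s)m}$ is guaranteed to be at least $2$ only when $m<n/(64\log^3 n)$. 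Under your weaker assumption $m<n/(32\log^3 n)$ it is guaranteed only to exceed $1$, which gives $\|Av\|_2^2\ge z=\|v\|_2^2$ — no contradiction with $\|Av\|_2^2\le(1+\delta_k)\|v\|_2^2$. The deficit cannot be absorbed asymptotically either: taking $s$ close to $m/64$ and $t$ close to $\log s$, the guaranteed factor tends to $1$ as $n\to\infty$. So the constant $64$ is not "simply not tight" in the way you assert; the honest resolutions are either to read the corollary's $32$ as inheriting the theorem's hypothesis $m<n/(64\log^3 n)$ (a constant-level inconsistency in the paper's statements), or to genuinely rework the theorem's proof — e.g.\ re-weighting the scales ($t^{1+\gamma}$ in place of $t^2$, or shifted thresholds) so that Case 2.1 closes under the weaker row bound — which is a real modification, not the one-line observation your proposal rests on.
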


\begin{remark}
\textup{
The restriction $m = O(n/\log^3 n)$ in \Theorem{rip-sparsity} was
relevant in \Equation{small-m}. Note the choice of $t^2$ in the
proof was just so that $\sum_t 1/t^2$ converges. We could instead have
chosen $t^{1+\gamma}$ and obtained a qualitatively similar result, but
with the slightly milder restriction $m = O(n/\log^{2+\gamma} n)$,
where $\gamma>0$ can be chosen as an arbitrary constant.
}
\end{remark}
\section{Oblivious Subspace Embedding Sparsity Lower Bound}\SectionName{subspacelb}
In this section, we show a lower bound on the dimension of very sparse
OSE's.

\begin{theorem}
Consider $d$ at least a large enough constant and $n\ge 2d^2$. Any
OSE with matrices $A$ in its support having  $m$ rows and at
most $1$ non-zero entry per column such that with probability at least
$1/5$, the lengths of all vectors in a fixed subspace of dimension $d$
of $\R^n$ are preserved up to a factor $2$, must have $m \ge d^2 /
214$.
\end{theorem}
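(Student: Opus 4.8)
The plan is to reduce to coordinate subspaces and exploit the rigid structure forced by $s=1$. Since every column of a matrix $A$ in the support has at most one non-zero entry, there is a function $h_A:[n]\to[m]$ and scalars $\sigma_i$ with $Ae_i=\sigma_i e_{h_A(i)}$ (taking $\sigma_i=0$ if column $i$ vanishes). For a $d$-subset $S\subseteq[n]$, let $W_S=\operatorname{span}\{e_i:i\in S\}$, a $d$-dimensional subspace. My first claim is that if $A$ preserves all lengths in $W_S$ up to a factor $2$, then $A$ must be \emph{good} for $S$, meaning $h_A$ is injective on $S$ and $\sigma_i\neq0$ for all $i\in S$. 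Indeed, if $\sigma_i=0$ then $\|Ae_i\|_2=0$ while $\|e_i\|_2=1$, and if $h_A(i)=h_A(j)$ for distinct $i,j\in S$ then the non-zero vector $x=\sigma_j e_i-\sigma_i e_j\in W_S$ satisfies $Ax=\sigma_i\sigma_j e_{h_A(i)}-\sigma_i\sigma_j e_{h_A(j)}=0$; either way length preservation fails by any reasonable reading of ``factor $2$.'' Hence success on $W_S$ implies goodness, so $\Pr_{A}(A\text{ good for }S)\ge\Pr_{A}(A\text{ preserves }W_S)\ge 1/5$ by the OSE guarantee applied to the fixed subspace $W_S$.

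Next I would bound, for a single fixed $A$, the number of good $d$-subsets. Writing $b_r=\card{\{i:\sigma_i\neq0,\ h_A(i)=r\}}$ for the bucket sizes, a subset is good iff it selects at most one coordinate from each bucket, so the number of good $d$-subsets is exactly the elementary symmetric polynomial $e_d(b_1,\dots,b_m)$. Since $\sum_r b_r\le n$, Maclaurin's (Newton's) inequality gives $e_d(b_1,\dots,b_m)\le\binom{m}{d}(n/m)^d$. Therefore, for a uniformly random $d$-subset $S$ of $[n]$ and any fixed $A$, the probability that $A$ is good for $S$ is at most
\[
R:=\frac{\binom{m}{d}(n/m)^d}{\binom{n}{d}}=\prod_{i=0}^{d-1}\frac{1-i/m}{1-i/n}\le\exp\!\Big(-\frac{d(d-1)}{2m}+\frac{d(d-1)}{n}\Big),
\]
where the last step uses $1-i/m\le e^{-i/m}$ and, since $i/n\le 1/2$ because $n\ge 2d^2$, the crude bound $1/(1-i/n)\le e^{2i/n}$.

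The contradiction then comes from swapping the two sources of randomness. On one hand, averaging the previous estimate over $A\sim\mathcal{D}$ gives $\E_S\big[\Pr_A(A\text{ good for }S)\big]=\E_A\big[\Pr_S(A\text{ good for }S)\big]\le R$. On the other hand, the first paragraph shows $\Pr_A(A\text{ good for }S)\ge1/5$ for \emph{every} $S$, so the same average is at least $1/5$. Plugging $m<d^2/214$ into the bound on $R$ makes $d(d-1)/(2m)$ exceed a large constant while $d(d-1)/n\le1/2$, so $R<1/5$ once $d$ is a large enough constant, and $1/5\le R<1/5$ is the desired contradiction; hence $m\ge d^2/214$.

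The main obstacle I anticipate is purely quantitative: turning the asymptotic estimate $R\approx e^{-d(d-1)/(2m)}$ into explicit, non-asymptotic inequalities on the binomial ratios that are tight enough to certify the specific constant $214$ for all $d$ above the stated threshold, and to absorb the $\binom{n}{d}$ denominator correctly using $n\ge 2d^2$. The conceptual steps---reading off the hash structure from $s=1$, the collision-to-zero-vector observation, and the expectation swap that converts the per-subspace OSE guarantee into a statement contradicted by counting injectively-hashed subsets---are straightforward; the care is all in the constants.
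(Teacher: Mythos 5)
Your proof is correct, and while it attacks the same hard instances as the paper, its technical core is genuinely different. Both arguments use random coordinate subspaces $W_S=\operatorname{span}\{e_i : i\in S\}$ and the same key observation that a zero column or a collision inside $S$ places a non-zero vector of $W_S$ in $\ker A$, so factor-$2$ preservation forces your ``goodness'' event; your expectation swap is just Yao's principle in averaging form, which is also how the paper begins. The divergence is in how $\Pr_S(A\text{ good for }S)$ is bounded for a fixed $A$: the paper defines \emph{heavy} rows (those with at least $n/(10m)$ non-zeroes), applies a Chernoff bound to show that at least $3d/4$ of the sampled indices land in heavy rows, and then runs a sequential birthday-paradox estimate among those indices; you instead count the good $d$-subsets exactly as $e_d(b_1,\dots,b_m)$ and bound this by $\binom{m}{d}(n/m)^d$ via Maclaurin/Schur-concavity, after which everything is one short explicit computation. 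Your route buys simplicity and constants: it needs no Chernoff bound (which the paper applies to sequentially conditioned, dependent indicators---a step that formally wants a stochastic-domination remark), and your closing worry about non-asymptotic constants is unfounded, since $1-x\le e^{-x}$ and $1/(1-x)\le e^{2x}$ for $0\le x\le 1/2$ give $R\le \exp\left(-107(d-1)/d+1/2\right)\le e^{-53}<1/5$ for all $d\ge 2$ once $m<d^2/214$ and $n\ge 2d^2$, so the constant $214$ is certified with enormous slack. Two small points to tidy: Maclaurin's inequality is usually stated for strictly positive reals, so either pass to non-negative $b_r$ by continuity, or discard empty buckets and use that $\binom{m}{d}/m^d$ is increasing in $m$; and if $m<d$ then $e_d(b_1,\dots,b_m)=0$, so the contradiction is immediate in that degenerate case.
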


\begin{proof}
Assume for the sake of contradiction that $m < d^2/214$. By Yao's
minimax principle, we only need to show there exists a distribution
over subspaces such that any fixed matrix $A$ with column sparsity $1$
and too few rows would fail to preserve
lengths of vectors in the subspace with probability more than $4/5$. 

Consider the uniform distribution over subspaces spanned by $d$
standard basis vectors in $\R^n$: $e_{i_1}, e_{i_2}, \ldots, e_{i_d}$
with $i_1,\ldots, i_d\in \{1, \ldots, n\}$. Let $a(i)$ be the row of
the non-zero entry in column $i$ of $A$ and $b(j)$ be the number of
non-zeroes in row $j$. We say $i$ {\em collides with $j$} if $a(i) =
a(j)$. Let the set of {\em heavy} rows be the set of rows $j$ such
that $b(j) \ge \frac{n}{10m}$.

If we pick $i_1, \ldots, i_d$ one by one. Conditioned on $i_1, \ldots,
i_{t-1}$, the probability that $a(i_t)$ is heavy is at least
$\frac{9}{10}-\frac{d}{n}\ge \frac{4}{5}$. Therefore, by a Chernoff
bound, with probability at least $9/10$, the number of indices $i_t$
such that $a(i_t)$ are heavy is at least $3d/4$.

We will show that conditioned on the number of such $i_t$ being at
least $3d/4$, with probability at least $9/10$, two such indices
collide. Let $j_{1}, \ldots, j_{3d/4}$ be indices with $b(a(j_t)) \ge
\frac{n}{10m}$. Conditioned on $a(j_1), \ldots, a(j_{t-1})$, the
probability that $j_t$ does not collide with any previous index is at
most 
$$1-\sum_{u=1}^{t-1} b(a(j_u))/(n-t+1) + (t-1)/(n-t+1)\le e^{-\sum_{u=1}^{t-1} b(a(j_u))/n+2(t-1)/n}
\le e^{-(t-1)/(10m)+2(t-1)/n} .$$

Thus, the probability that no collision occurs is at most
$e^{(-(3d/4)^2 /(40m))+((3d/4)^2/n)} < 1/10$. In other words, collision occurs with
probability at least $9/10$. When collision occurs, the number of
non-zero entries of $AM$, where $M$ is the matrix whose columns are
$e_{i_1}, \ldots, e_{i_d}$, is at most $d-1$ so it has rank at most
$d-1$. Therefore, with probability at least $4/5$, $A$ maps some
non-zero vector in the subspace to the zero vector (any vector $Mx$
for $x\in \ker(AM)$) and fails to preserve the length of all vectors
in the subspace.
\end{proof}

\section{Lower Bound on Number of Rows for RIP Matrices}\SectionName{rip-rows}
In this section we show a lower bound on the number of rows of any
$k$-RIP matrix with distortion $\delta_k$. First we need the following
form of the Chernoff bound.

\begin{theorem}[Chernoff bound]
Let $X_1,\ldots, X_n$ be independent random variables each at most $K$
in magnitude almost surely, and with $\sum_{i=1}^n\E X_i = \mu$
and $\var{\sum_{i=1}^n X_i} = \sigma^2$. Then 
$$ \forall \ \lambda >0,\ \prob{\left|\sum_{i=1}^n X_i - \mu\right| >
  \lambda \sigma} < C\cdot \max\left\{e^{-c\lambda^2}, (\lambda
  K/\sigma)^{-c\lambda\sigma/K}\right\} $$
for some absolute constants $c,C>0$.
\end{theorem}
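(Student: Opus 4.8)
The plan is to prove this by the standard exponential-moment (Chernoff) method, pushed to its optimized Bennett form; the two terms inside the $\max$ are exactly the small-deviation (subgaussian) and large-deviation (Poisson) regimes of one and the same bound. First I would reduce to the upper tail: replacing each $X_i$ by $-X_i$ handles the lower tail by an identical argument, and a union bound over the two tails is absorbed into the constant $C$. Set $Y_i = X_i - \E X_i$, so that the $Y_i$ are independent, mean zero, bounded in magnitude by $b := 2K$ (since $|\E X_i| \le K$), with $\sum_i \Var Y_i = \sigma^2$.

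The crux is a bound on the moment generating function of each bounded, mean-zero $Y_i$. Using that $y \mapsto (e^{\theta y} - 1 - \theta y)/y^2$ is increasing on $[-b,b]$, one gets, for every $\theta > 0$, the estimate $\E[e^{\theta Y_i}] \le \exp\!\big(\tfrac{\sigma_i^2}{b^2}(e^{\theta b} - 1 - \theta b)\big)$, where $\sigma_i^2 = \Var Y_i$. Multiplying over $i$ and applying Markov's inequality to $e^{\theta \sum_i Y_i}$ gives, with $t = \lambda\sigma$,
$$ \prob{\sum_i Y_i > \lambda\sigma} \le \exp\!\left(\frac{\sigma^2}{b^2}\big(e^{\theta b} - 1 - \theta b\big) - \theta\lambda\sigma\right). $$
Choosing the minimizing $\theta = \tfrac1b\ln(1 + b\lambda/\sigma)$ collapses this to Bennett's inequality
$$ \prob{\sum_i Y_i > \lambda\sigma} \le \exp\!\left(-\frac{\sigma^2}{b^2}\,h\!\big(b\lambda/\sigma\big)\right), \qquad h(u) = (1+u)\ln(1+u) - u. $$

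It then remains to split on the size of $u := b\lambda/\sigma = 2K\lambda/\sigma$. In the regime $u \le 1$ (i.e.\ $\lambda K$ at most a constant times $\sigma$) I would use the elementary bound $h(u) = \Omega(u^2)$, which makes the exponent $\Omega(\lambda^2)$ and yields the first term $e^{-c\lambda^2}$. In the regime $u \ge 1$ I would instead use $h(u) = \Omega(u\ln(1+u))$, so the exponent is $\Omega\big((\lambda\sigma/K)\ln(\lambda K/\sigma)\big)$, yielding the second term $(\lambda K/\sigma)^{-c\lambda\sigma/K}$. Taking the larger of the two per-regime bounds and folding in the matching lower-tail estimate produces the claimed $\max$ with absolute constants $c,C$. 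The hard part is purely technical rather than conceptual: it is the MGF estimate (via monotonicity of $(e^{\theta y}-1-\theta y)/y^2$) together with verifying the two convexity lower bounds on $h$ with constants clean enough that both regimes reproduce the stated exponents. Everything past that is the textbook Chernoff pipeline.
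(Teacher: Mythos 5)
The paper offers no proof of this statement to compare against: it is imported as a known, standard form of the Chernoff bound and used only as a tool to establish \Lemma{codeexists}, so the only question is whether your argument is correct on its own terms. It is, and it is the canonical route. Centering to $Y_i = X_i - \E X_i$ with $|Y_i| \le b = 2K$, the MGF estimate $\E\left[e^{\theta Y_i}\right] \le \exp\bigl(\sigma_i^2(e^{\theta b}-1-\theta b)/b^2\bigr)$ via monotonicity of $y\mapsto (e^{\theta y}-1-\theta y)/y^2$, Markov's inequality, and the optimizing choice $\theta = b^{-1}\ln(1+b\lambda/\sigma)$ do yield Bennett's inequality with exponent $-(\sigma^2/b^2)h(u)$, $u = b\lambda/\sigma$; and the elementary facts $h(u) \ge c u^2$ on $u\le 1$ (since $h(u)/u^2$ is decreasing and $h(1)>1/3$) and $h(u) \ge \tfrac12 u\ln(1+u)$ on $u \ge 1$ reproduce the two terms of the stated max, with the two-tail union bound absorbed into $C$. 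One crossover point deserves an explicit sentence in a full write-up: for $u\ge 1$ your exponent involves $\ln(1+2\lambda K/\sigma)$ while the stated term involves $\ln(\lambda K/\sigma)$, and these are not comparable by a constant factor when $\lambda K/\sigma$ is near $1$. But this is harmless: if $\lambda K/\sigma \le 1$ then $(\lambda K/\sigma)^{-c\lambda\sigma/K} \ge 1$ and the claimed inequality is vacuous for $C\ge 1$, and if $\lambda K/\sigma \ge 1$ then $\ln(1+2\lambda K/\sigma) \ge \ln(\lambda K/\sigma)$ outright. With that observation every step of your outline goes through with absolute constants, so the proposal is complete modulo routine verification.
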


This form of the Chernoff bound can then be used to show the existence
of a large error-correcting code with high relative distance.

\begin{lemma}\LemmaName{codeexists}
For any $0<\eps\le 1/2$ and integers $k,n$ with $1 \le k \le \eps n/2$,
there exists a $q$-ary code with $q = n/k$ and block length $k$ of
relative distance $1-\eps$, and with size at least
$$ \min\left\{e^{C' \eps^2 n}, e^{C'\eps k\log\left(\frac{\eps
        n}{2k}\right)}\right\} $$
for some absolute constant $C'>0$.
\end{lemma}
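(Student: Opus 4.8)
The plan is to prove this via the probabilistic method, using a Gilbert--Varshamov--style random code together with the Chernoff bound stated immediately above. I would draw $M$ codewords $C_1,\ldots,C_M$ independently and uniformly at random from $[q]^k$ with $q = n/k$, bound the probability that a fixed pair is too close, and then delete codewords to destroy every offending pair.

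Fix two codewords. The number of coordinates on which they agree is $Y = \sum_{i=1}^k X_i$, where the $X_i$ are i.i.d.\ Bernoulli$(1/q)$ indicators of agreement in coordinate $i$; here $1/q = k/n$. Thus $\mu = \E Y = k^2/n$, $\sigma^2 = \var{Y} \le k^2/n = \mu$, and each $X_i \le K = 1$. Call a pair \emph{bad} if it has relative distance below $1-\eps$, i.e.\ if $Y > \eps k$. The hypothesis $k \le \eps n/2$ gives $\mu = k(k/n) \le \eps k/2$, so the deviation $\Delta := \eps k - \mu \ge \eps k/2$; writing $\Delta = \lambda\sigma$ and applying the Chernoff bound yields
$$ \Pr(Y > \eps k) \le \Pr(|Y-\mu| > \Delta) < C\max\left\{e^{-c\Delta^2/\sigma^2},\ (\Delta/\sigma^2)^{-c\Delta}\right\}. $$

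The crux is matching the two branches to the two terms in the claimed minimum, which I would do by substituting $\Delta \ge \eps k/2$ and $\sigma^2 \le k^2/n$. For the subgaussian branch, $\Delta^2/\sigma^2 \ge (\eps k/2)^2 (n/k^2) = \eps^2 n/4$, giving $e^{-c\eps^2 n/4}$. For the heavy-tailed branch, I would write it as $e^{-c\Delta\ln(\Delta/\sigma^2)}$; using $\Delta \ge \eps k/2$ and $\Delta/\sigma^2 \ge (\eps k/2)(n/k^2) = \eps n/(2k) \ge 1$ (again invoking $k \le \eps n/2$) lower-bounds the positive quantity $\Delta\ln(\Delta/\sigma^2)$, hence upper-bounds this branch by $e^{-(c/2)\eps k\ln(\eps n/(2k))}$. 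Therefore a pair is bad with probability $p \le C\max\{e^{-c\eps^2 n/4},\ e^{-(c/2)\eps k\ln(\eps n/(2k))}\}$.

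Finally I would run the standard deletion step: the expected number of bad pairs is at most $\binom{M}{2}p < M^2 p/2$, so some outcome has at most this many, and deleting one codeword from each leaves a code of relative distance $1-\eps$ and size at least $M - M^2 p/2$, maximized near $M = 1/p$ to give size $\ge 1/(2p)$. Substituting the bound on $p$ yields size $\ge \frac{1}{2C}\min\{e^{c\eps^2 n/4},\ e^{(c/2)\eps k\ln(\eps n/(2k))}\}$, and the leading $1/(2C)$ factor (and the $\ln$-versus-$\log$ discrepancy) is absorbed into a slightly smaller absolute constant $C'$; in the degenerate regime where the claimed size is $O(1)$ a single codeword suffices. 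The main point to watch is the direction bookkeeping in the two Chernoff branches—confirming that the lower bounds on $\Delta$ and on $\Delta/\sigma^2$ both push the estimates in the upper-bounding direction—and checking that taking $q = \lfloor n/k\rfloor$ when $n/k$ is not an integer keeps the per-coordinate collision probability at $\Theta(k/n)$, affecting only the constants.
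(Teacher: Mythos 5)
Your proof is correct and takes essentially the same approach as the paper: draw codewords uniformly from $[q]^k$, compute $\E X = k^2/n \le \eps k/2$ and $\var{X}\le k^2/n$ for the pairwise agreement count, and apply the stated Chernoff bound so that its two branches yield exactly the two exponentials in the minimum. The only difference is cosmetic: the paper takes $N = \min\{e^{C\eps^2 n}, e^{C\eps k\log(\eps n/(2k))}\}$ random codewords and union-bounds over all pairs to get relative distance $1-\eps$ with positive probability, whereas your deletion step gives size roughly $1/(2p)$ instead of $1/\sqrt{p}$---a marginally stronger count that is in any case absorbed into the constant $C'$.
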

\begin{proof}
We take a random code. That is, pick
$$ N = \min\left\{e^{C \eps^2 n}, e^{C\eps k\log\left(\frac{\eps
        n}{2k}\right)}\right\} $$
codewords with alphabet size $q = n/k$ and block length $k$, with
replacement. Now, look at two of these randomly chosen codewords. For
$i = 1,\ldots,k$, let $X_i$ be an indicator random variable for the
event that the $i$th symbol is equal in the two codewords. Then $X =
\sum_{i=1}^k X_i$ is the number of positions at which these two
codewords agree, and $\E X = k^2/n \le \eps k/2$ and $\var{X}
\le k^2/n$. Thus by the Chernoff bound,
$$\Pr\left(|X| > \eps k\right) < C \cdot \max\left\{e^{-c \eps^2 n},
  e^{-c\eps k\log\left(\frac{\eps n}{2k}\right)}\right\} .$$
Therefore by a union bound, a random multiset of $N$ codewords has
relative distance $1-\eps$ with positive probability (in which case it
must also clearly be not just a multiset, but a set).
\end{proof}

Before proving the main theorem of this section, we also need the
following theorem of Alon \cite{Alon09}.

\begin{theorem}[Alon {\cite{Alon09}}]\TheoremName{alon}
Let $x_1,\ldots,x_N\in\R^n$ be such that $\|x_i\|_2 = 1$ for all $i$,
and $|\inprod{x_i,x_j}| \le \eps$ for all $i\neq j$, where $1/\sqrt{n}
< \eps < 1/2$. Then $n = \Omega(\eps^{-2}\log N/\log(1/\eps))$.
\end{theorem}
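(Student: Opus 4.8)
The plan is to run the entrywise (Schur) power / tensor-rank argument on the Gram matrix of the $x_i$ and to extract the extra factor of $\eps^{-1}$ over the naive estimate through a second-moment step. Let $M\in\R^{N\times N}$ be the Gram matrix, $M_{ij}=\inprod{x_i,x_j}$, so $M_{ii}=1$ and $|M_{ij}|\le\eps$ for $i\neq j$. Fix an integer $k\ge 1$ to be chosen and let $P$ be the entrywise $k$-th power of $M$, i.e. $P_{ij}=M_{ij}^k=\inprod{x_i,x_j}^k=\inprod{x_i^{\otimes k},x_j^{\otimes k}}$. Thus $P$ is the Gram matrix of the symmetric tensors $x_i^{\otimes k}$, which span a space of dimension at most $\binom{n+k-1}{k}$; in particular $P$ is positive semidefinite and $\mathrm{rank}(P)\le\binom{n+k-1}{k}=:D$. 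Also $P_{ii}=1$ and $|P_{ij}|\le\eps^k$ off the diagonal.

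Next I would run the trace / Cauchy--Schwarz step. Since $P\succeq 0$ has at most $D$ nonzero eigenvalues $\lambda_1,\dots$ (all $\ge 0$), Cauchy--Schwarz gives $(\mathrm{tr}\,P)^2=(\sum_l\lambda_l)^2\le D\sum_l\lambda_l^2=D\,\mathrm{tr}(P^2)$. Now $\mathrm{tr}\,P=N$, while $\mathrm{tr}(P^2)=\sum_{i,j}P_{ij}^2\le N+N(N-1)\eps^{2k}\le N+N^2\eps^{2k}$. Hence $N^2\le D(N+N^2\eps^{2k})$, i.e. $D\ge N/(1+N\eps^{2k})$. The crucial point is that it is $\eps^{2k}$, not $\eps^k$, that enters here, because the second moment squares the off-diagonal entries.

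Finally I would choose $k=\lceil \log N/(2\log(1/\eps))\rceil$, the smallest integer with $N\eps^{2k}\le1$, so that $D\ge N/2$. Combining with $\binom{n+k-1}{k}\le(e(n+k-1)/k)^k$ and taking logarithms yields $n\gtrsim \tfrac{k}{e}\,(N/2)^{1/k}$. For this $k$ one has $(N/2)^{1/k}=\Omega(\eps^{-2})$ (since $\log N/k\to 2\log(1/\eps)$), so $n=\Omega(k\eps^{-2})=\Omega(\eps^{-2}\log N/\log(1/\eps))$, as claimed. The hypothesis $\eps<1/2$ makes $\log(1/\eps)$ positive, and $\eps>1/\sqrt n$ keeps us in the non-degenerate regime where $\eps^{-2}<n$, so the statement is a genuine lower bound rather than an impossibility claim.

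The heart of the argument, and the only place where one beats the elementary bound, is the decision to lower-bound $\mathrm{rank}(P)$ via the second moment $\mathrm{tr}(P^2)$ rather than by forcing $P$ to be nonsingular. Demanding nonsingularity of $P=I+E$ needs $\|E\|\le N\eps^k<1$, i.e. $k\gtrsim\log N/\log(1/\eps)$, which after inversion gives only $n=\Omega(\eps^{-1}\log N/\log(1/\eps))$. The trace step lets $k$ be a factor of two smaller while still guaranteeing $D\ge N/2$; this halving of $k$ is exactly what promotes $N^{1/k}$ from $\eps^{-1}$ to $\eps^{-2}$, recovering the tight denominator $\log(1/\eps)$. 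I expect this to be the conceptual obstacle; the remaining work is bookkeeping, namely controlling the ceiling in $k$ so that $\log N/k$ is genuinely close to $2\log(1/\eps)$ (which needs $k$ bounded away from $1$, i.e. $N$ not too small, the small-$N$ case being handled directly from $D\ge N/2$) and checking the $n\ge k$ regime in the binomial estimate.
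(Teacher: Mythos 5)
Your route is exactly the one in Alon's paper \cite{Alon09}, which this paper only cites rather than reproves: take the $k$-th entrywise (Hadamard) power $P$ of the Gram matrix, bound $\mathrm{rank}(P)\le\binom{n+k-1}{k}$ via symmetric tensors, and lower-bound the rank by the second-moment inequality $\mathrm{rank}(P)\ge(\mathrm{tr}\,P)^2/\mathrm{tr}(P^2)\ge N/(1+N\eps^{2k})$. All of that is correct, and you correctly identify the conceptual point that the second moment yields $\eps^{2k}$ where a diagonal-dominance/nonsingularity argument would only yield $\eps^{k}$.

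The gap is in your final calibration, and the patch you sketch does not cover it. With $k=\lceil x\rceil$ for $x=\ln N/(2\ln(1/\eps))$ you get $\binom{n+k-1}{k}\ge N/2$, hence $n\gtrsim (k/e)(N/2)^{1/k}$, and you claim $(N/2)^{1/k}=\Omega(\eps^{-2})$ ``since $\log N/k\to 2\log(1/\eps)$.'' But closeness of exponents does not give closeness of exponentials: $N^{1/k}=\eps^{-2x/k}=\eps^{-2}\cdot\eps^{2(k-x)/k}$, and the deficit factor can be as small as $\eps^{2/k}=e^{-2\ln(1/\eps)/k}$, which is bounded below by a constant only when $k=\Omega(\ln(1/\eps))$, i.e.\ $\ln N=\Omega(\ln^2(1/\eps))$. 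Concretely, take $\eps=2^{-1000}$ and $N=\eps^{-201}$, so $x=100.5$ and $k=101$: your bound falls short of the theorem by the factor $\eps^{1/101}$, which is unbounded as $\eps\to 0$, yet the hypotheses $1/\sqrt{n}<\eps<1/2$ are satisfiable there; since $N$ and $k$ are both huge, neither ``handle small $N$ directly from $D\ge N/2$'' nor ``$k$ bounded away from $1$'' repairs this regime. The fix is to round \emph{down}: take $k=\lfloor x\rfloor$, so that $N\eps^{2k}\ge 1$ and your own inequality gives $\mathrm{rank}(P)\ge N/(1+N\eps^{2k})\ge\tfrac12\eps^{-2k}$; then $\binom{n+k-1}{k}\ge\tfrac12\eps^{-2k}$ yields $e(n+k-1)/k\ge 2^{-1/k}\eps^{-2}$, hence $n=\Omega(k\eps^{-2})=\Omega(\eps^{-2}\log N/\log(1/\eps))$ directly, with no dependence on how close $\ln N/k$ is to $2\ln(1/\eps)$. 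The leftover case $x<1$ (i.e.\ $N<\eps^{-2}$) is trivial: the hypothesis $\eps>1/\sqrt{n}$ already gives $n>\eps^{-2}\ge\tfrac12\,\eps^{-2}\ln N/\ln(1/\eps)$.
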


\begin{theorem}\TheoremName{rip-rows}
For any $0<\delta_k\le 1/2$ and integers $k,n$ with $1 \le k \le
\delta_k n/2$,
any $k$-RIP matrix with distortion $\delta_k$ must have
$\Omega\left(\min\{n/\log(1/\delta_k),
  (k/(\delta_k\log(1/\delta_k)))\log(n / k)\}\right)$ rows.
\end{theorem}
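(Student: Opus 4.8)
The plan is to derive the bound from Alon's dimension estimate (\Theorem{alon}): I would build a large family of sparse, nearly orthogonal unit vectors, show that a $k$-RIP matrix $A$ maps them to an almost-orthonormal system in $\R^m$, and then let \Theorem{alon} bound $m$ from below in terms of the family's size. The family comes from a good code via \Lemma{codeexists}, read off through a one-hot encoding, and RIP is exactly what lets the small pairwise inner products survive the map $x\mapsto Ax$.

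First I would invoke \Lemma{codeexists} with relative distance $1-\eps$ for $\eps=\Theta(\delta_k)$, block length $K=\floor{k/2}$, and alphabet size $q=n/K$; the hypothesis $k\le\delta_k n/2$ ensures $K\le\eps n/2$, so the lemma applies and outputs a code of size $N$. Encoding each codeword as the vector that puts $1/\sqrt K$ into the coordinate singled out by its symbol in each of the $K$ blocks produces unit vectors $x_1,\dots,x_N\in\R^n$, each exactly $K$-sparse, with $\inprod{x_i,x_j}\le\eps$ whenever $i\neq j$, since two distinct codewords agree in at most an $\eps$ fraction of the $K$ blocks.

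Next I would transfer near-orthogonality through $A$. Because each $x_i$ is $K$-sparse and every $x_i\pm x_j$ is $2K\le k$-sparse, the $k$-RIP hypothesis applies to all of them; polarization, $\inprod{Ax_i,Ax_j}=\tfrac14(\norm{A(x_i+x_j)}^2-\norm{A(x_i-x_j)}^2)$, together with the $(1\pm\delta_k)$ bounds yields $\abs{\inprod{Ax_i,Ax_j}}\le 2\delta_k+\inprod{x_i,x_j}=O(\delta_k)$ while $\norm{Ax_i}\in[\sqrt{1-\delta_k},\sqrt{1+\delta_k}]$. Normalizing $y_i=Ax_i/\norm{Ax_i}$ gives unit vectors in $\R^m$ of incoherence $\eps'=O(\delta_k)$, so \Theorem{alon} gives $m=\Omega(\delta_k^{-2}\log N/\log(1/\delta_k))$. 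Feeding in the size from \Lemma{codeexists}, whose two Chernoff regimes give $\log N=\Theta(\delta_k^2 n)$ or $\log N=\Theta(\delta_k K\log(\delta_k n/K))$, should then yield the two terms $n/\log(1/\delta_k)$ and $\delta_k^{-1}k\log(n/k)/\log(1/\delta_k)$ of the claimed minimum.

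The step I expect to be the crux is the near-orthogonality transfer and the parameter balancing it forces. The code distance must be taken proportional to $\delta_k$: a smaller $\eps$ cannot push $\eps'$ below the $\Theta(\delta_k)$ floor that RIP itself contributes and only shrinks $N$, whereas a larger $\eps$ inflates $\eps'$ and destroys the $\delta_k^{-2}$ gain in \Theorem{alon}; meanwhile the support must be held at $K\le k/2$, the one place the RIP constant is used, so that every sum and difference stays $k$-sparse. The genuinely delicate bookkeeping is then to confirm that the logarithmic factor is the advertised $\log(n/k)$ and that both regimes of \Lemma{codeexists} reproduce the corresponding term of the minimum under $k\le\delta_k n/2$; since that constraint forces $\log(1/\delta_k)\le\log(n/k)$ and makes the sub-exponential regime dominate, one must check carefully that the bound does not degrade to the weaker $\log(\delta_k n/k)$, which is precisely where I expect the argument to require the most work for very small $\delta_k$. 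Finally, the edge case $\eps'\le 1/\sqrt m$, where \Theorem{alon} is inapplicable, is dispatched separately by noting that $m$ is then already below the target.
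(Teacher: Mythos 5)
Your proposal follows essentially the same route as the paper's own proof: take a code from \Lemma{codeexists} with relative distance $1-\Theta(\delta_k)$ and block length $k/2$, one-hot encode it into $k/2$-sparse unit vectors with pairwise inner products $O(\delta_k)$, push the near-orthogonality through $A$ by applying RIP to the $k$-sparse sums/differences (the paper reads $\inprod{Ay_i,Ay_j}$ off $\|Ay_i-Ay_j\|_2^2$ rather than via your polarization identity, a cosmetic difference), and finish by applying \Theorem{alon} to the normalized images. The one loose end you flag---that the code size yields $\log(\delta_k n/k)$ where the statement asserts $\log(n/k)$, which differ when $\delta_k n/k$ is small compared to $1/\delta_k$---is a genuine subtlety, but the paper's own proof glosses over it in exactly the same way, so your attempt matches the paper in both substance and level of rigor.
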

\begin{proof}
Let $C_1,\ldots,C_N$ be a code as in \Lemma{codeexists} with block
length $n/(k/2)$ and alphabet size $k/2$ with 
$$ N \ge \min\left\{e^{C \delta_k^2 n}, e^{C\delta_k k\log\left(\frac{\delta_k
        n}{k}\right)}\right\} .$$
Consider a set of vectors $y_1,\ldots,y_N$ in $\R^n$ defined as
follows. For $j=0,\ldots,k/2-1$, we define $(y_i)_{2jn/k + (C_i)_j} =
\sqrt{2/k}$, and all other coordinates of $y_i$ are $0$. Then we have
$\forall i\ \|y_i\|_2 = 1$, and also $0\le \inprod{y_i,y_j} \le \delta_k$
for all $i\neq j$, and thus $2 - 2\delta_k\le \|y_i - y_j\|_2^2 \le
2$. Since $y_i$ is $k/2$-sparse and $y_i - y_j$ is $k$-sparse for all
$i,j$, we have for any $k$-RIP matrix $A$ with distortion $\delta_k$
$$ \forall i\ \|Ay_i\|_2 = 1\pm\delta_k,\hspace{.5in} \forall i\neq j\
\|Ay_i - Ay_j\|_2^2 = (1\pm \delta_k)^2\cdot (2 \pm 2\delta_k) = 2\pm
9\delta_k .$$

Thus if we define $x_1,\ldots, x_N$ by $x_i = Ay_i /
\|Ay_i\|_2$, then the $x_i$ satisfy the requirements of \Theorem{alon}
with inner products at most $O(\delta_k)$ in magnitude. The lower bound on
the number of rows of $A$ then follows.
\end{proof}

It is also possible to obtain a lower bound on the number of rows of
$A$ in \Theorem{rip-rows} of the form
$\Omega(\delta_k^{-2}k/\log(1/\delta_k))$. This is because a theorem of
\cite{KW11} shows that any such RIP matrix with $k = \Theta(\log n)$, when
its column signs are flipped randomly, is a JL matrix for any set of
$n$ points with high probability. We then know from \Theorem{alon}
that a JL matrix must have $m = \Omega(\delta_k^{-2}\log n/\log(1/\delta_k))$
rows, which is $\Omega(\delta_k^{-2}k/\log(1/\delta_k))$.

\begin{corollary}
Suppose $1/\sqrt{n} \le \delta_k \le 1/2$ and $A\in\R^{m\times n}$ is a
$k$-RIP matrix with
distortion $\delta_k$. Then $m =
\Omega(\log^{-1}(1/\delta_k)\cdot \min\{k\log(n/k)/\delta_k + k/\delta_k^2,
n\})$.
\end{corollary}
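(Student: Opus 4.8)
The plan is to obtain the corollary by combining two separate lower bounds on $m$ and then simplifying the result with an elementary min/max manipulation. Write $L = \log(1/\delta_k)$, $a = \delta_k^{-1}k\log(n/k)$, and $b = \delta_k^{-2}k$; the target is $m = \Omega(L^{-1}\min\{a+b,\, n\})$.

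The first bound is \Theorem{rip-rows} itself, which directly gives $m = \Omega(L^{-1}\min\{n,\, a\})$. The second is the bound sketched in the remark following \Theorem{rip-rows}: randomly flipping the column signs of a $k$-RIP matrix turns it into a JL embedding (via \cite{KW11}) for any fixed finite set $T$ with $\log|T| = O(k)$. I would take $T$ to be a maximal family of mutually $\delta_k$-incoherent unit vectors in $\R^n$ together with their pairwise differences; preserving the norms of all these vectors forces the images of the family to be $O(\delta_k)$-incoherent in $\R^m$, so \Theorem{alon} gives $m = \Omega(\delta_k^{-2}\log|T|/L)$. The quantitative point to get right is that such a family exists with $\log|T| = \Theta(\min\{k,\, \delta_k^2 n\})$: the $\delta_k^2 n$ factor is the packing limit on $\delta_k$-incoherent vectors in $\R^n$, and the $k$ factor is the restricted-isometry-order constraint of \cite{KW11}. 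Substituting yields the second bound $m = \Omega(L^{-1}\min\{n,\, b\})$; the cap at $n$ is precisely what keeps this bound from over-claiming when $b \gg n$.

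Since any admissible $A$ satisfies both bounds, I would take their maximum and apply the lattice identity $\max\{\min\{n,a\},\min\{n,b\}\} = \min\{n, \max\{a,b\}\}$ together with $\max\{a,b\}\ge (a+b)/2$ to get
$$ m = \Omega\!\left(\frac{1}{L}\min\{n,\, \max\{a,b\}\}\right) = \Omega\!\left(\frac{1}{L}\min\{n,\, a+b\}\right), $$
which, after substituting $a,b,L$, is exactly the claimed bound. The side conditions $1 \le k \le \delta_k n/2$ and $1/\sqrt n \le \delta_k \le 1/2$ are inherited from \Theorem{rip-rows} and \Theorem{alon}.

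The step I expect to be the real work is making the second bound rigorous, not the final algebra. One must check that the \cite{KW11} reduction applies with JL distortion $O(\delta_k)$ at order $\Theta(\log|T|)\le k$—which follows from monotonicity of the restricted isometry constants, $\delta_{\Theta(\log|T|)}\le\delta_k$, combined with the fact that $x^{-2}/\log(1/x)$ is decreasing, so that the smaller isometry constant only strengthens the bound—and that the packing cap $\log|T| = O(\delta_k^2 n)$ prevents the resulting row bound from exceeding the achievable $O(n/L)$. Everything after that is the routine min/max simplification above.
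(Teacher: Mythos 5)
Your proposal is correct, and its overall architecture is the one the paper intends: the corollary is meant to be read off from \Theorem{rip-rows} combined with the bound sketched in the remark following it, which are exactly your two ingredients, and your concluding lattice identity $\max\{\min\{n,a\},\min\{n,b\}\}=\min\{n,\max\{a,b\}\}\ge\tfrac{1}{2}\min\{n,a+b\}$ (with $a=\delta_k^{-1}k\log(n/k)$, $b=\delta_k^{-2}k$) is valid. Where you genuinely add something is in the second ingredient, and the addition is necessary rather than cosmetic. The paper's remark justifies the $\Omega(\delta_k^{-2}k/\log(1/\delta_k))$ term only for $k=\Theta(\log n)$, by applying \cite{KW11} to a point set of size $\poly(n)$ built from the columns; for general $k$ that literal argument yields only $m=\Omega(\delta_k^{-2}\min\{k,\log n\}/\log(1/\delta_k))$, which is too weak to give the corollary once $k\gg\log n$. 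Concretely, for $k=\sqrt{n}$ and $\delta_k=n^{-1/4}$ the corollary asserts $m=\Omega(n/\log n)$, while \Theorem{rip-rows} together with the remark as literally stated give only $\Omega(n^{3/4})$. Your replacement point set --- $2^{\Theta(\min\{k,\delta_k^{2}n\})}$ mutually $\delta_k$-incoherent unit vectors together with their pairwise differences (the differences are what let polarization transfer incoherence to the image vectors, where \Theorem{alon} is then applied) --- is exactly the right fix: the $k$ cap is forced by the order requirement of \cite{KW11}, the $\delta_k^{2}n$ cap is the packing limit, and the latter is what produces the $\min\{\cdot,n\}$ truncation in the statement, which, as you correctly note, the uncapped bound would otherwise over-claim. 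Two small loose ends remain, both of which the paper itself also glosses over: \Theorem{alon} formally requires the incoherence of the image vectors in $\R^m$ to exceed $1/\sqrt{m}$, and the corollary's hypotheses omit the condition $k\le\delta_k n/2$ of \Theorem{rip-rows}, so for larger $k$ one should pass to order $k'=\delta_k n/2$ by monotonicity of restricted isometry constants, at which point the claimed bound is already $\Omega(n/\log(1/\delta_k))$.
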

\section{Future Directions}\SectionName{future}

For several applications the JL lemma is used as a black box to obtain
dimensionality-reducing linear maps for other problems. For example,
applying the JL lemma with distortion $O(\delta_k)$ on a certain
net with
$N = O\binom{n}{k}\cdot O(1/\delta_k)^k$ vectors yields a $k$-RIP matrix
with distortion $\delta_k$ \cite{BDDW08}. Note in this case, for
constant $\delta_k$, the number of rows one obtains is the optimal
$\Theta(\log N) = \Theta(k\log(n/k))$. Applying the distributional
JL lemma with
distortion $O(\eps)$ to a certain net of size $2^{O(d)}$ yields an OSE
with $m = O(d/\eps^2)$ rows to preserve $d$-dimensional subspaces (see \cite[Fact
10]{CW12}, based on \cite{AHK06}).

Applying the JL lemma in this black-box way using the sparse JL
matrices of \cite{KN12}  yields a factor-$\eps$ improvement 
in sparsity over
using a random dense JL construction, with for example random Gaussian
entries. However, some examples have shown that it is possible to
do much
better by not using the JL lemma statement as a black box, but rather
by analyzing the
sparsity required from the constructions in \cite{KN12} ``from
scratch'' for the problem at hand. For example, the work \cite{NN12a}
showed that one can
have column sparsity $O(1/\eps)$ with $m = O(d^{1+\gamma}/\eps^2)$
rows in an OSE for any $\gamma>0$, which is much better than the
column sparsity $O(d/\eps)$
that is obtained by using the sparse JL theorem as a black box.

We thus pose the following open problem in the realm of understanding
sparse embedding matrices better. Let $\mathcal{D}$ be an OSNAP
distribution \cite{NN12a} over $\mathrm{R}^{m\times n}$ with column
sparsity $s$. The class of OSNAP distributions includes both of the
sparse JL distributions in \cite{KN12}, and more generally an OSNAP
distribution is
 characterized by the following three properties where $A$ is a random
 matrix drawn from $\mathcal{D}$:
\begin{itemize}
\item All entries of $A$ are in $\{0,1/\sqrt{s},-1/\sqrt{s}\}$. We
  write $A_{i,j} = \delta_{i,j}\sigma_{i,j}/\sqrt{s}$ where
  $\delta_{i,j}$ is an indicator random variable for the event
  $A_{i,j} \neq 0$, and the $\sigma_{i,j}$ are independent uniform
  $\pm 1$ r.v.'s.
\item For any $j\in [n]$, $\sum_{i=1}^m \delta_{i,j} = s$ with
  probability $1$.
\item For any $S\subseteq [m]\times [n]$, $\E \prod_{(i,j) \in S}
  \delta_{i,j} \le (s/m)^{|S|}$.
\end{itemize}
Given a set of vectors $V\subset \R^n$,
what is the tradeoff between the number of rows $m$ and the column
sparsity $s$ required for a random matrix $A$ drawn from an OSNAP
distribution to preserve all $\ell_2$ norms of vectors $v\in V$ up to
$1\pm\eps$ simultaneously, with positive probability, as a function of
the geometry of $V$? We are motivated to ask this question by a result
of \cite{KM05}, which states that for a set of vectors
$V\subseteq\R^n$ all of unit $\ell_2$ norm, a matrix with random
subgaussian entries preserves all $\ell_2$ norms of vectors in $V$ up
to $1\pm\eps$ as long as the number of rows $m$ satisfies
\begin{equation}
m \ge C\eps^{-2} \cdot \left(\E_g \sup_{x\in V}
  \left|\inprod{g,x}\right| \right)^2 ,
\EquationName{gaussianproc}
\end{equation}
where $g\in\R^n$ has independent Gaussian
entries of mean $0$ and variance $1$. The bound on $m$ in \cite{KM05}
is actually
stated as $C\eps^{-2}(\gamma_2(V, \|\cdot\|_2))^2$ where $\gamma_2$ is the
$\gamma_2$ functional, but this is equivalent to
\Equation{gaussianproc} up to a constant factor; see
\cite{Talagrand05} for details. Note \Equation{gaussianproc}
easily implies the $m=O(d/\eps^2)$ bound for OSE's by letting $V$ be
the unit sphere in any $d$-dimensional subspace,  and also implies
$m=O(\delta_k^{-2}k\log(n/k))$ suffices for RIP matrices by letting
$V$ be the set of all $k$-sparse vectors of unit norm.

Note that the resolution of this question will not just be in terms of
the $\gamma_2$ functional. In particular, for constant $\delta_k$ we
see that $m,s=\Theta((\gamma_2(V))^2)$ is necessary and sufficient
when $V$ is the set of all unit norm $k$-sparse vectors. 
Even increasing $m$ to $\Theta((\gamma_2(V))^{2+\gamma})$ does not
decrease the lower bound on $s$ by much.
Meanwhile
for $V$ a unit sphere of a $d$-dimensional subspace, we can
simultaneously have $m = O((\gamma_2(V))^{2+\gamma}/\eps^2)$, and
$s=O(1/\eps)$ not depending on $\gamma_2(V)$ at all.

\bibliographystyle{alpha}

\bibliography{../allpapers}

\newcommand{\etalchar}[1]{$^{#1}$}
\begin{thebibliography}{DMIMW12}

\bibitem[AC09]{AC09}
Nir Ailon and Bernard Chazelle.
\newblock The {Fast} {Johnson--Lindenstrauss} transform and approximate nearest
  neighbors.
\newblock {\em SIAM J. Comput.}, 39(1):302--322, 2009.

\bibitem[Ach03]{Achlioptas03}
Dimitris Achlioptas.
\newblock Database-friendly random projections: {Johnson-Lindenstrauss} with
  binary coins.
\newblock {\em J. Comput. Syst. Sci.}, 66(4):671--687, 2003.

\bibitem[AHK06]{AHK06}
Sanjeev Arora, Elad Hazan, and Satyen Kale.
\newblock A fast random sampling algorithm for sparsifying matrices.
\newblock In {\em Proceedings of the 10th International Workshop on
  Randomization and Computation (RANDOM)}, pages 272--279, 2006.

\bibitem[AL09]{AL09}
Nir Ailon and Edo Liberty.
\newblock Fast dimension reduction using {Rademacher} series on dual {BCH}
  codes.
\newblock {\em Discrete Comput. Geom.}, 42(4):615--630, 2009.

\bibitem[AL11]{AL11}
Nir Ailon and Edo Liberty.
\newblock Almost optimal unrestricted fast {Johnson-Lindenstrauss} transform.
\newblock In {\em Proceedings of the 22nd Annual ACM-SIAM Symposium on Discrete
  Algorithms (SODA)}, pages 185--191, 2011.

\bibitem[Alo09]{Alon09}
Noga Alon.
\newblock Perturbed identity matrices have high rank: Proof and applications.
\newblock {\em Combinatorics, Probability {\&} Computing}, 18(1-2):3--15, 2009.

\bibitem[AV06]{AV06}
Rosa~I. Arriaga and Santosh Vempala.
\newblock An algorithmic theory of learning: Robust concepts and random
  projection.
\newblock {\em Machine Learning}, 63(2):161--182, 2006.

\bibitem[BD08]{BD08}
Thomas Blumensath and Mike~E. Davies.
\newblock Iterative hard thresholding for compressed sensing.
\newblock {\em J. Fourier Anal. Appl.}, 14:629--654, 2008.

\bibitem[BDDW08]{BDDW08}
Richard Baraniuk, Mark Davenport, Ronald DeVore, and Michael Wakin.
\newblock A simple proof of the restricted isometry property for random
  matrices.
\newblock {\em Constr. Approx.}, 28:253--263, 2008.

\bibitem[BI09]{BI09}
Radu Berinde and Piotr Indyk.
\newblock Sequential sparse matching pursuit.
\newblock In {\em Proceedings of the 47th Annual Allerton Conference on
  Communication, Control, and Computing}, pages 36--43, 2009.

\bibitem[BIPW10]{DIPW10}
Khanh~Do Ba, Piotr Indyk, Eric Price, and David~P. Woodruff.
\newblock Lower bounds for sparse recovery.
\newblock In {\em Proceedings of the 21st Annual ACM-SIAM Symposium on Discrete
  Algorithms (SODA)}, pages 1190--1197, 2010.

\bibitem[BIR08]{BIR08}
Radu Berinde, Piotr Indyk, and Milan Ru\v{z}ic.
\newblock Practical near-optimal sparse recovery in the {L1} norm.
\newblock In {\em Proceedings of the 46th Annual Allerton Conference on
  Communication, Control, and Computing}, pages 198--205, 2008.

\bibitem[BOR10]{BOR10}
Vladimir Braverman, Rafail Ostrovsky, and Yuval Rabani.
\newblock Rademacher chaos, random {Eulerian} graphs and the sparse
  {Johnson-Lindenstrauss} transform.
\newblock {\em CoRR}, abs/1011.2590, 2010.

\bibitem[Can08]{Candes08}
Emmanuel~J. Cand\`{e}s.
\newblock The restricted isometry property and its implications for compressed
  sensing.
\newblock {\em C. R. Acad. Sci. Paris}, 346:589--592, 2008.

\bibitem[Cha10]{ChandarThesis}
Venkat~B. Chandar.
\newblock {\em Sparse Graph Codes for Compression, Sensing, and Secrecy}.
\newblock PhD thesis, Massachusetts Institute of Technology, 2010.

\bibitem[CRT06a]{crt06b}
Emmanuel~J. Cand\`{e}s, Justin Romberg, and Terence Tao.
\newblock Robust uncertainty principles: Exact signal reconstruction from
  highly incomplete frequency information.
\newblock {\em IEEE Trans. Inf. Theory}, (52):489--509, 2006.

\bibitem[CRT06b]{crt06}
Emmanuel~J. Cand\`{e}s, Justin Romberg, and Terence Tao.
\newblock Stable signal recovery from incomplete and inaccurate measurements.
\newblock {\em Communications on Pure and Applied Mathematics}, 59(8), 2006.

\bibitem[CT05]{CT05}
Emmanuel~J. Cand{\`e}s and Terence Tao.
\newblock Decoding by linear programming.
\newblock {\em IEEE Trans. Inf. Theory}, 51(12):4203--4215, 2005.

\bibitem[CT06]{CT06}
Emmanuel~J. Cand\`{e}s and Terence Tao.
\newblock Near-optimal signal recovery from random projections: universal
  encoding strategies?
\newblock {\em IEEE Trans. Inf. Theory}, 52:5406--5425, 2006.

\bibitem[CW12]{CW12}
Kenneth~L. Clarkson and David~P. Woodruff.
\newblock Low rank approximation and regression in input sparsity time.
\newblock {\em CoRR}, abs/1207.6365v2, 2012.

\bibitem[DG03]{DG03}
Sanjoy Dasgupta and Anupam Gupta.
\newblock An elementary proof of a theorem of {Johnson} and {Lindenstrauss}.
\newblock {\em Random Struct. Algorithms}, 22(1):60--65, 2003.

\bibitem[DKS10]{DKS10}
Anirban Dasgupta, Ravi Kumar, and Tam{\'a}s Sarl{\'o}s.
\newblock A sparse {Johnson-Lindenstrauss} transform.
\newblock In {\em Proceedings of the 42nd ACM Symposium on Theory of Computing
  (STOC)}, pages 341--350, 2010.

\bibitem[DMIMW12]{DMMW12}
Petros Drineas, Malik Magdon-Ismail, Michael Mahoney, and David Woodruff.
\newblock Fast approximation of matrix coherence and statistical leverage.
\newblock In {\em Proceedings of the 29th International Conference on Machine
  Learning (ICML)}, 2012.

\bibitem[Don06]{don06}
David~L. Donoho.
\newblock Compressed sensing.
\newblock {\em IEEE Trans. Inf. Theory}, 52(4):1289--1306, 2006.

\bibitem[DTDlS12]{DTDS12}
David~L. Donoho, Yaakov Tsaig, Iddo Drori, and Jean luc Starck.
\newblock Sparse solution of underdetermined linear equations by stagewise
  orthogonal matching pursuit.
\newblock {\em IEEE Trans. Inf. Theory}, 58:1094--1121, 2012.

\bibitem[FM88]{FM88}
Peter Frankl and Hiroshi Maehara.
\newblock The {Johnson-Lindenstrauss} lemma and the sphericity of some graphs.
\newblock {\em J. Comb. Theory. Ser. B}, 44(3):355--362, 1988.

\bibitem[Fou11]{Foucart11}
Simon Foucart.
\newblock Hard thresholding pursuit: an algorithm for compressive sensing.
\newblock {\em SIAM J. Numer. Anal.}, 49(6):2543--2563, 2011.

\bibitem[GG84]{GG84}
Andrej~Y. Garnaev and Efim~D. Gluskin.
\newblock On the widths of the {Euclidean} ball.
\newblock {\em Soviet Mathematics Doklady}, 30:200--203, 1984.

\bibitem[GK09]{GK09}
Rahul Garg and Rohit Khandekar.
\newblock Gradient descent with sparsification: an iterative algorithm for
  sparse recovery with restricted isometry property.
\newblock In {\em Proceedings of the 26th Annual International Conference on
  Machine Learning (ICML)}, pages 337--344, 2009.

\bibitem[Gor88]{Gordon88}
Yehoram Gordon.
\newblock On {Milman's} inequality and random subspaces which escape through a
  mesh in $\mathbb{R}^n$.
\newblock {\em Geometric Aspects of Functional Analysis}, pages 84--106, 1988.

\bibitem[IM98]{IM98}
Piotr Indyk and Rajeev Motwani.
\newblock Approximate nearest neighbors: Towards removing the curse of
  dimensionality.
\newblock In {\em Proceedings of the 30th ACM Symposium on Theory of Computing
  (STOC)}, pages 604--613, 1998.

\bibitem[Ind01]{Indyk01}
Piotr Indyk.
\newblock Algorithmic applications of low-distortion geometric embeddings.
\newblock In {\em Proceedings of the 42nd Annual Symposium on Foundations of
  Computer Science (FOCS)}, pages 10--33, 2001.

\bibitem[IR08]{IR08}
Piotr Indyk and Milan Ru\v{z}ic.
\newblock Near-optimal sparse recovery in the {L1} norm.
\newblock In {\em Proceedings of the 49th Annual IEEE Symposium on Foundations
  of Computer Science (FOCS)}, pages 199--207, 2008.

\bibitem[JL84]{JL84}
William~B. Johnson and Joram Lindenstrauss.
\newblock Extensions of {Lipschitz} mappings into a {Hilbert} space.
\newblock {\em Contemporary Mathematics}, 26:189--206, 1984.

\bibitem[Ka{\v{s}}77]{Kashin77}
Boris~Sergeevich Ka{\v{s}}in.
\newblock The widths of certain finite-dimensional sets and classes of smooth
  functions.
\newblock {\em Izv. Akad. Nauk SSSR Ser. Mat.}, 41(2):334--351, 478, 1977.

\bibitem[KM05]{KM05}
Bo'az Klartag and Shahar Mendelson.
\newblock Empirical processes and random projections.
\newblock {\em J. Funct. Anal.}, 225(1):229--245, 2005.

\bibitem[KN10]{KN10}
Daniel~M. Kane and Jelani Nelson.
\newblock A derandomized sparse {Johnson-Lindenstrauss} transform.
\newblock {\em CoRR}, abs/1006.3585, 2010.

\bibitem[KN12]{KN12}
Daniel~M. Kane and Jelani Nelson.
\newblock Sparser {Johnson}-{Lindenstrauss} transforms.
\newblock In {\em SODA}, pages 1195--1206, 2012.

\bibitem[KW11]{KW11}
Felix Krahmer and Rachel Ward.
\newblock New and improved {J}ohnson-{L}indenstrauss embeddings via the
  {R}estricted {I}sometry {P}roperty.
\newblock {\em SIAM J. Math. Anal.}, 43(3):1269--1281, 2011.

\bibitem[LDP07]{LDP07}
Michael Lustig, David Donoho, and John~M. Pauly.
\newblock Sparse {MRI}: The application of compressed sensing for rapid {MR}
  {Imaging}.
\newblock {\em Magnetic Resonance in Medicine}, 58:1182--1195, 2007.

\bibitem[Mat08]{Matousek08}
Jir\'{\i} Matousek.
\newblock On variants of the {Johnson-Lindenstrauss} lemma.
\newblock {\em Random Struct. Algorithms}, 33(2):142--156, 2008.

\bibitem[MM12]{MM12}
Xiangrui Meng and Michael~W. Mahoney.
\newblock Low-distortion subspace embeddings in input-sparsity time and
  applications to robust linear regression.
\newblock {\em CoRR}, abs/1210.3135, 2012.

\bibitem[MP12]{MP12}
Gary~L. Miller and Richard Peng.
\newblock Iteratives approaches to row sampling.
\newblock Manuscript, 2012.

\bibitem[NN12]{NN12a}
Jelani Nelson and Huy~L. Nguy$\tilde{\hat{\mbox{e}}}$n.
\newblock {OSNAP}: Faster numerical linear algebra algorithms via sparser
  subspace embeddings.
\newblock Manuscript, 2012.

\bibitem[NT09]{NT09}
Deanna Needell and Joel~A. Tropp.
\newblock {CoSaMP}: Iterative signal recovery from incomplete and inaccurate
  samples.
\newblock {\em Appl. Comput. Harmon. Anal.}, 26:301--332, 2009.

\bibitem[NV09]{NV09}
Deanna Needell and Roman Vershynin.
\newblock Uniform uncertainty principle and signal recovery via regularized
  orthogonal matching pursuit.
\newblock {\em Foundations of Computational Mathematics}, 9(3):317--334, 2009.

\bibitem[NV10]{NV10}
Deanna Needell and Roman Vershynin.
\newblock Signal recovery from inaccurate and incomplete measurements via
  regularized orthogonal matching pursuit.
\newblock {\em IEEE Journal of Selected Topics in Signal Processing},
  4:310--316, 2010.

\bibitem[Sar06]{Sarlos06}
Tam{\'a}s Sarl{\'o}s.
\newblock Improved approximation algorithms for large matrices via random
  projections.
\newblock In {\em Proceedings of the 47th Annual IEEE Symposium on Foundations
  of Computer Science (FOCS)}, pages 143--152, 2006.

\bibitem[Tal05]{Talagrand05}
Michel Talagrand.
\newblock {\em The generic chaining: upper and lower bounds of stochastic
  processes}.
\newblock Springer Verlag, 2005.

\bibitem[TG07]{TG07}
Joel~A. Tropp and Anna~C. Gilbert.
\newblock Signal recovery from random measurements via orthogonal matching
  pursuit.
\newblock {\em IEEE Trans. Inf. Theory}, 53(12):4655--4666, 2007.

\bibitem[Tro11]{Tropp11}
Joel~A. Tropp.
\newblock Improved analysis of the subsampled randomized {Hadamard} transform.
\newblock {\em Adv. Adapt. Data Anal., Special Issue on Sparse Representation
  of Data and Images}, 3(1--2):115--126, 2011.

\bibitem[Vem04]{Vempala04}
Santosh Vempala.
\newblock {\em The random projection method}, volume~65 of {\em DIMACS Series
  in Discrete Mathematics and Theoretical Computer Science}.
\newblock American Mathematical Society, 2004.

\bibitem[WDL{\etalchar{+}}09]{WDLSA09}
Kilian~Q. Weinberger, Anirban Dasgupta, John Langford, Alexander~J. Smola, and
  Josh Attenberg.
\newblock Feature hashing for large scale multitask learning.
\newblock In {\em Proceedings of the 26th Annual International Conference on
  Machine Learning (ICML)}, pages 1113--1120, 2009.

\bibitem[ZWSP08]{ZWSP08}
Yunhong Zhou, Dennis~M. Wilkinson, Robert Schreiber, and Rong Pan.
\newblock Large-scale parallel collaborative filtering for the netflix prize.
\newblock In {\em Proceedings of the 4th International Conference on
  Algorithmic Aspects in Information and Management (AAIM)}, pages 337--348,
  2008.

\end{thebibliography}

\end{document}